\documentclass[12pt,cls,onecolumn]{IEEEtran}
\usepackage{graphicx,amsmath,amssymb,epsfig, amsfonts, cite, latexsym, cuted, multicol, multirow, subfigure, stfloats, array, tabularx}
\usepackage{subeqnarray}
\usepackage{color}
\usepackage{setspace}
\usepackage{anysize}

\begin{document}

\title{Opportunistic Network Decoupling With Virtual Full-Duplex Operation in Multi-Source Interfering Relay Networks}
\author{\large Won-Yong Shin, \emph{Senior Member}, \emph{IEEE}, Vien V. Mai, \\Bang Chul Jung \emph{Senior Member}, \emph{IEEE}, and Hyun Jong Yang, \emph{Member}, \emph{IEEE}
\\
\thanks{This work was supported by the Basic Science Research Program
through the National Research Foundation of Korea (NRF) funded by
the Ministry of Education (2014R1A1A2054577) and by the Ministry
of Science, ICT \& Future Planning (MSIP) (2015R1A2A1A15054248).
The material in this paper was presented in part at the IEEE
International Conference on Communications, Sydney, Australia,
June 2014~\cite{ONDICC:14}.}
\thanks{W.-Y. Shin is with the Department of Computer Science and
Engineering, Dankook University, Yongin 448-701, Republic of Korea
(E-mail: wyshin@dankook.ac.kr).}
\thanks{V.~V.~Mai was with Dankook University, Yongin 448-701, Republic of Korea. He is
now with KTH Royal Institute of Technology, Stockholm SE-100 44,
Sweden (E-mail: mienvanmaidt2@gmail.com).}
\thanks{B. C. Jung
is with the Department of Electronics Engineering, Chungnam
National University, Daejeon 305-764, Republic of Korea (E-mail:
bcjung@cnu.ac.kr).}
\thanks{H. J. Yang
(corresponding author) is with the School of Electrical and
Computer Engineering, UNIST, Ulsan 689-798, Republic of Korea
(E-mail: hjyang@unist.ac.kr).}
} \maketitle


\markboth{IEEE Transactions on Mobile Computing} {Shin {et al.}:
Opportunistic Network Decoupling With Virtual Full-Duplex
Operation in Multi-Source Interfering Relay Networks}


\newtheorem{definition}{Definition}
\newtheorem{theorem}{Theorem}
\newtheorem{lemma}{Lemma}
\newtheorem{example}{Example}
\newtheorem{corollary}{Corollary}
\newtheorem{proposition}{Proposition}
\newtheorem{conjecture}{Conjecture}
\newtheorem{remark}{Remark}

\newcommand{\red}[1]{{\textcolor[rgb]{1,0,0}{#1}}}
\newcommand{\blue}[1]{{\textcolor[rgb]{0,0,1}{#1}}}
\newcommand{\Vgreen}[1]{{\textcolor[rgb]{0,0.5,0}{#1}}}

\def \diag{\operatornamewithlimits{diag}}
\def \log{\operatorname{log}}
\def \rank{\operatorname{rank}}
\def \out{\operatorname{out}}
\def \exp{\operatorname{exp}}
\def \arg{\operatorname{arg}}
\def \E{\operatorname{E}}
\def \tr{\operatorname{tr}}
\def \SNR{\operatorname{SNR}}
\def \dB{\operatorname{dB}}
\def \ln{\operatorname{ln}}

\def \be {\begin{eqnarray}}
\def \ee {\end{eqnarray}}
\def \ben {\begin{eqnarray*}}
\def \een {\end{eqnarray*}}

\newcommand{\Pro}[1]{\mathrm{Pr}\left\{#1\right\}}
\newcommand{\LIF}[2]{\tilde{L}_{\pi_1(#1),#2}}
\newcommand{\TIL}[2]{L_{\pi_2(#1),#2}}
\newcommand{\TIF}[2]{T_{\pi_1(#1),#2}}
\newcommand{\KIF}[2]{T_{\pi_1(#1),\pi_2(#2)}}
\newcommand{\snr}{\textsf{snr}}
\newcommand{\sinr}{\textsf{sinr}}
\newcommand{\CanSB}{\mathcal{B}}
\newcommand{\CanSA}{\mathcal{A}}
\newcommand{\Norm}[1]{\left|{#1}\right|}
\newcommand{\PL}{\textsf{PL}}

\begin{abstract}
We introduce a new achievability scheme, termed {\em opportunistic
network decoupling (OND)}, operating in virtual full-duplex mode.
In the scheme, a novel relay scheduling strategy is utilized in
the {\em $K\times N\times K$ channel with interfering relays},
consisting of $K$ source--destination pairs and $N$ half-duplex
relays in-between them. A subset of relays using alternate
relaying is opportunistically selected in terms of producing the
minimum total interference level, thereby resulting in network
decoupling. As our main result, it is shown that under a certain
relay scaling condition, the OND protocol achieves $K$ degrees of
freedom even in the presence of interfering links among relays.
Numerical evaluation is also shown to validate the performance of
the proposed OND. Our protocol basically operates in a fully
distributed fashion along with local channel state information,
thereby resulting in a relatively easy implementation.
\end{abstract}

\begin{keywords}
Degrees of freedom (DoF), half-duplex, interference, $K\times
N\times K$ channel, opportunistic network decoupling (OND), relay,
virtual full-duplex (FD).
\end{keywords}

\newpage

\section{Introduction}

\subsection{Previous Work}

Interference between wireless links has been taken into account as
a critical problem in wireless communication systems. Recently,
interference alignment~(IA) was proposed for fundamentally solving
the interference problem when there are two communication
pairs~\cite{MaddahAliMotahariKhandani:08}. It was shown
in~\cite{Jafar_IA_original} that the IA scheme can achieve the
optimal degrees of freedom~(DoF), which is equal to $K/2$, in the
$K$-user interference channel with time-varying channel
coefficients. Since then, interference management schemes based on
IA have been further developed and analyzed in various wireless
network environments: multiple-input multiple-output (MIMO)
interference networks~\cite{Jafar_IA_distributed,Jafar_IA_MIMO}, X
networks~\cite{Jafar_Shamai}, and cellular
networks~\cite{Tse_IA,MotahariGharanMaddah-AliKhandani,JungShin,JPS:12:COM}.

On the one hand, following up on these successes for single-hop
networks, more recent and emerging work has studied multihop
networks with multiple source-destination (S--D) pairs. For the
2-user 2-hop network with 2 relays (referred to as the
$2\times2\times2$ interference channel), it was shown
in~\cite{GuoJafarWangJeonChung} that interference neutralization
combining with symbol extension achieves the optimal DoF. A more
challenging network model is to consider $K$-user two-hop
relay-aided interference channels, consisting of $K$
source-destination (S--D) pairs and $N$ helping relay nodes
located in the path between S--D pairs, so-called the $K\times
N\times K$ channel. Several achievability schemes have been known
for the network, but more detailed understanding is still in
progress. By applying the result from~\cite{RankovWittneben} to
the $K\times N\times K$ channel, one can show that $K/2$ DoF is
achieved by using orthogonalize-and-forward relaying, which
completely neutralizes interference at all destinations if $N$ is
greater than or equal to $K(K-1)+1$. Another achievable scheme,
called aligned network diagonalization, was introduced
in~\cite{ShomoronyAvestimehr} and was shown to achieve the optimal
DoF in the $K\times N\times K$ channel while tightening the
required number of relays. The scheme
in~\cite{ShomoronyAvestimehr} is based on the real interference
alignment framework~\cite{MotahariGharanMaddah-AliKhandani}.
In~\cite{GuoJafarWangJeonChung,ShomoronyAvestimehr}, however, the
system model under consideration assumes that there is no
interfering signal between relays and the relays are full-duplex.
Moreover, in~\cite{GouWangJafar}, the $2\times2\times2$
interference channel with full-duplex relays interfering with each
other was characterized and its DoF achievability was shown using
aligned interference neutralization.\footnote{The idea
in~\cite{GouWangJafar} was later extended to the 2-user 3-hop
network with 4 relays, i.e., the $2\times2\times2\times2$
interference channel~\cite{GouWangJafar_Asilomar}.}

On the other hand, there are lots of results on the usefulness of
fading in the literature, where one can obtain the multiuser
diversity gain in broadcast channels: opportunistic
scheduling~\cite{Knopp_Opp}, opportunistic
beamforming~\cite{Viswanath_Opp}, and random
beamforming~\cite{Hassibi_RBF}. Such opportunism can also be fully
utilized in multi-cell uplink or downlink networks by using an
opportunistic interference alignment
strategy~\cite{JPS:12:COM,H_Yang13_TWC,Codebook_OIA,ODIA_TSP14}.
Various scenarios exploiting the multiuser diversity gain have
been studied in cooperative networks by applying an opportunistic
two-hop relaying protocol~\cite{Poor_Opp} and an opportunistic
routing~\cite{Shin_Opp}, and in cognitive radio networks with
opportunistic scheduling~\cite{bcjung_CR, ShenFitz}. In addition,
recent results~\cite{Ergodic_Viswanath,JeonChung} have shown how
to utilize the opportunistic gain when there are a large number of
channel realizations. More specifically, to amplify signals and
cancel interference, the idea of opportunistically pairing
complementary channel instances has been studied in interference
networks~\cite{Ergodic_Viswanath} and multi-hop relay
networks~\cite{JeonChung}. In cognitive radio
environments~\cite{ZhangLiang}, opportunistic spectrum sharing was
introduced by allowing secondary users to share the radio spectrum
originally allocated to primary users via transmit adaptation in
space, time, or frequency.

\subsection{Main Contributions}

In this paper, we study the {\em $K\times N\times K$ channel with
interfering relays}, which can be taken into account as one of
practical multi-source interfering relay networks and be regarded
as a fundamentally different channel model from the conventional
$K\times N\times K$ channel in~\cite{ShomoronyAvestimehr}. Then,
we introduce an {\em opportunistic network decoupling (OND)}
protocol that achieves full DoF with comparatively easy
implementation under the channel model. This work focuses on the
$K\times N\times K$ channel with one additional assumption that
$N$ {\em half-duplex} (HD) relays interfere with each other, which
is a more feasible scenario. The scheme adopts the notion of the
multiuser diversity gain for performing interference management
over two hops. More precisely, in our scheme, a scheduling
strategy is presented in time-division duplexing (TDD) two-hop
environments with time-invariant channel coefficients, where a
subset of relays is opportunistically selected in terms of
producing the minimum total interference level. To improve the
spectral efficiency, the {\em alternate relaying} protocol
in~\cite{FanWangThompsonPoor,XueSandhu,Zhang08} is employed with a
modification, which eventually enables our system to operate in
{\em virtual full-duplex} mode. As our main result, it turns out
that in a high signal-to-noise ratio (SNR) regime, the OND
protocol asymptotically achieves the min-cut upper bound of $K$
DoF even in the presence of inter-relay interference and
half-duplex assumption, provided the number of relays, $N$, scales
faster than $\snr^{3K-2}$, which is the minimum number of relays
required to guarantee our achievability result.
Numerical evaluation also indicates that the OND protocol has
higher sum-rates than those of other relaying methods under
realistic network conditions (e.g., finite $N$ and SNR) since the
inter-relay interference is significantly reduced owing to the
opportunistic gain. For comparison, the OND scheme without
alternate relaying and the max-min SNR scheme are also shown as
baseline schemes. Note that our protocol basically operates with
local channel state information (CSI) at the transmitter and thus
is suitable for distributed/decentralized networks.

Our main contributions are fourfold as follows:

\begin{itemize}
\item In the $K\times N\times K$ channel with interfering relays,
we introduce a new achievability scheme, termed OND with virtual
full-duplex operation.

\item Under the channel model, we completely analyze the optimal
DoF, the required relay scaling condition, and the decaying rate
of the interference level, where the OND scheme is shown to
approach the min-cut upper bound on the DoF.

\item Our achievability result (i.e., the derived DoF and relay
scaling law) is validated via numerical evaluation.

\item We perform extensive computer simulations with other
baseline schemes.
\end{itemize}

\subsection{Organization}

The rest of this paper is organized as follows. In
Section~\ref{sec:2}, we describe the system and channel models. In
Section~\ref{sec:3}, the proposed OND scheme is specified and its
lower bound on the DoF is analyzed. Section~\ref{sec:4} shows an
upper bound on the DoF. Numerical results of the proposed OND
scheme are provided in Section~\ref{sec:5}. Finally, we summarize
the paper with some concluding remarks in Section~\ref{sec:6}.

\subsection{Notations}

Throughout this paper, $\mathbb{C}$, $\mathbb{E}[\cdot]$, and
$\left\lceil{\cdot}\right\rceil$ indicate the field of complex
numbers, the statistical expectation, and the ceiling operation,
respectively. Unless otherwise stated, all logarithms are assumed
to be to the base 2.

Moreover, TABLE~\ref{Tab:Notations} summarizes the notations used
throughout this paper. Some notations will be more precisely
defined in the following sections, where we introduce our channel
model and achievability results.

\begin{table}[t]
    \centering
    \caption{Summary of notations}
    \label{Tab:Notations}
\begin{tabular}{|c|c|}
  \hline
  Notation & Description \\
  \noalign{\hrule height 1.2pt}
  \hline
  $\mathcal{S}_k$ & $k$th source\\
  \hline
  $\mathcal{D}_k$ & $k$th destination\\
  \hline
  $\mathcal{R}_k$ & $k$th relay\\
  \hline
  $h_{ik}^{(1)}$ & channel coefficient from $\mathcal{S}_k$ to $\mathcal{R}_i$\\
  \hline
  $h_{ki}^{(2)}$ & channel coefficient from $\mathcal{R}_i$ to $\mathcal{D}_k$\\
  \hline
  $h_{ik}^{(r)}$ & channel coefficient between $\mathcal{R}_i$ and $\mathcal{R}_k$\\
  \hline
  $x_k^{(1)}(l)$ & $l$th transmitted symbol of $k$th source\\
  \hline
  \multirow{2}{*}{$\pi_s(k)~(s=1,2)$} & indices of two relays \\ & helping $k$th S--D pair\\
\hline
  $x_k^{(1)}(l)$ & $l$th transmit symbol of $\mathcal{S}_k$\\
  \hline
  $x_{\pi_s(k)}^{(2)}(l)$ & $l$th transmit symbol of
  $\mathcal{R}_{\pi_s(k)}$\\
  \hline
    $\mathbf{\Pi}_s~(s=1,2)$ & two selected relay sets\\
  \hline
      $\tilde{L}_{i,k}$ & scheduling metric in Step 1\\
  \hline
      $L_{i,k}^{\mathbf{\Pi}_2}$ & scheduling metric in Step 2\\
  \hline
        $\textsf{DoF}_\textsf{total}$ & total number of DoF\\
  \hline
   $\sinr_{\pi_s(k)}^{(1)}$ & SINR at $\mathcal{R}_{\pi_s(k)}$\\
  \hline
     $\sinr_{k,\pi_s(k)}^{(2)}$ & SINR at $\mathcal{D}_k$ (from $\mathcal{R}_{\pi_s(k)}$)\\
  \hline
\end{tabular}
\end{table}


\section{System and Channel Models} \label{sec:2}

As one of two-hop cooperative scenarios, we consider the $K \times
N \times K$ channel model with interfering relays, which fits into
the case where each S--D pair is geographically far apart and/or
experiences strong shadowing (thus requiring the response to a
huge challenge for achieving the target spectral efficiency). In
the channel model, it is thus assumed that each source transmits
its own message to the corresponding destination only through one
of $N$ relays, and thus there is no direct path between an S--D
pair. Note that unlike the conventional $K \times N \times K$
channel, relay nodes are assumed to interfere with each other in
our model. There are $K$ S--D pairs, where each receiver is the
destination of exactly one source node and is interested only in
traffic demands of the source. As in the typical cooperative
relaying setup, $N$ relay nodes are located in the path between
S--D pairs so as to help to reduce path-loss attenuations.
%

Suppose that each node is equipped with a single transmit antenna.
Each relay node is assumed to operate in half-duplex mode and to
fully decode, re-encode, and retransmit the source message i.e.,
decode-and-forward protocol is taken into account. We assume that
each node (either a source or a relay) has an average transmit
power constraint $P$. Unlike the work
in~\cite{GuoJafarWangJeonChung,ShomoronyAvestimehr}, $N$ relays
are assumed to interfere with each other.\footnote{If we can
cancel the interfering signals among multiple relays, then the
existing achievable scheme of the $K \times N \times K$ channel
can also be applied here.} To improve the spectral efficiency, the
alternate relaying protocol
in~\cite{FanWangThompsonPoor,XueSandhu,Zhang08} is employed with a
modification. With alternate relaying, each selected relay node
toggles between the transmit and listen modes for alternate time
slots of message transmission of the sources. If $N$ is
sufficiently large, then it is possible to exploit the channel
randomness for each hop and thus to obtain the opportunistic gain
in multiuser environments. In this work, we do not assume the use
of any sophisticated multiuser detection schemes at each receiver
(either a relay or a destination node), thereby resulting in an
easier implementation.

Now, let us turn to channel modeling. Let $\mathcal{S}_k$,
$\mathcal{D}_k$, and $\mathcal{R}_i$ denote the $k$th source, the
corresponding $k$th destination, and the $i$th relay node,
respectively, where $k\in\{1,\cdots,K\}$ and $i\in\{1,\cdots,N\}$.
The terms $h_{ik}^{(1)}, h_{ki}^{(2)} \in \mathbb{C}$ denote the
channel coefficients from $\mathcal{S}_k$ to $\mathcal{R}_i$ and
from $\mathcal{R}_i$ to $\mathcal{D}_k$, corresponding to the
first and second hops, respectively. The term $h_{in}^{(r)} \in
\mathbb{C}$ indicates the channel coefficient between two relays
$\mathcal{R}_i$ and $\mathcal{R}_n$. All the channels are assumed
to be Rayleigh, having zero-mean and unit variance, and to be
independent across different $i$, $k$, $n$, and hop index $r$. We
assume the block-fading model, i.e., the channels are constant
during one block (e.g., frame), consisting of one scheduling time
slot and $L$ data transmission time slots, and changes to a new
independent value for every block.

\begin{figure}[t]
    \centerline{\includegraphics[width=0.57\textwidth]{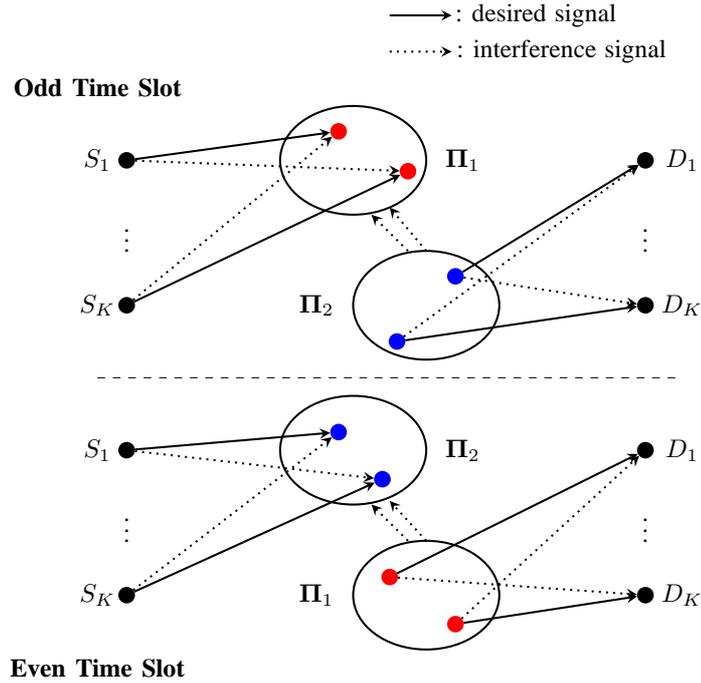}}
    \caption{The overall procedure of our OND scheme in the $K\times N\times K$ channel with interfering relays.
    }
    \label{fig:dig}
\end{figure}

\section{Achievability Results} \label{sec:3}

In this section, we describe the OND protocol, operating in
virtual full-duplex mode, in the $K\times N\times K$ channel with
interfering relays. Then, its performance is analyzed in terms of
achievable DoF along with a certain relay scaling condition. The
decaying rate of the interference level is also analyzed. In
addition, the OND protocol with no alternate relaying and its
achievability result are shown for comparison.

\subsection{OND in the $K\times N\times K$ Channel With Interfering Relays} \label{SEC:Protocol}
In this subsection, we introduce an OND protocol as the achievable
scheme to guarantee the optimal DoF of the $K \times N \times K$
channel with inter-relay interference, where $2K$ relay nodes
among $N$ candidates are opportunistically selected for data
forwarding in the sense of producing a sufficiently small amount
of interference level. The proposed scheme is basically performed
by utilizing the channel reciprocity of TDD systems.

Suppose that $\pi_1(k)$ and $\pi_2(k)$ denote the indices of two
relays communicating with the $k$th S--D pair for
$k\in\{1,\cdots,K\}$. In this case, without loss of generality,
assuming that the number of data transmission time slots, $L$, is
an odd number, the specific steps of each node during one block
are described as follows:

\begin{itemize}
\item Time slot 1: Sources $\mathcal{S}_1,\cdots,\mathcal{S}_K$
transmit their first encoded symbols
$x_1^{(1)}(1),\cdots,x_K^{(1)}(1)$, where $x_k^{(1)}(l)$
represents the $l$th transmit symbol of the $k$th source
node.\footnote{For notational convenience, we use scalar notation
instead of vector notation for each coding block from source
nodes, but the size of each symbol is assumed to be sufficiently
long to achieve the Shannon-theoretic channel capacity.} A set of
$K$ selected relay nodes,
$\mathbf{\Pi}_1=\{\pi_1(1),\cdots,\pi_1(K)\}$, operating in
receive mode at each odd time slot, listens to
$x_1^{(1)}(1),\cdots,x_K^{(1)}(1)$ (note that a relay selection
strategy will be specified later). Other $N-K$ relay nodes and
destinations $\mathcal{D}_1,\cdots,\mathcal{D}_K$ remain idle.

\item Time slot 2: The $K$ sources transmit their encoded symbols
$x_1^{(1)}(2),\cdots,x_K^{(1)}(2)$. The $K$ selected relays in the
set $\mathbf{\Pi}_1$ forward their first re-encoded symbols
$x_{\pi_1(1)}^{(2)}(1),\cdots,x_{\pi_1(K)}^{(2)}(1)$ to the
corresponding $K$ destinations. If the relays in $\mathbf{\Pi}_1$
successfully decode the corresponding symbols, then
$x_{\pi_1(k)}^{(2)}(1)$ is the same as $x_{k}^{(1)}(1)$. Another
set of $K$ selected relay nodes,
$\mathbf{\Pi}_2=\{\pi_2(1),\cdots,\pi_2(K)\}$, operating in
receive mode at each even time slot, listens to and decodes
$x_1^{(1)}(2),\cdots,x_K^{(1)}(2)$ while being interfered with by
$\mathcal{R}_{\pi_1(1)},\cdots,\mathcal{R}_{\pi_1(K)}$. The $K$
destinations receive and decode
$x_{\pi_1(1)}^{(2)}(1),\cdots,x_{\pi_1(K)}^{(2)}(1)$ from
$\mathcal{R}_{\pi_1(1)},$ $\cdots,\mathcal{R}_{\pi_1(K)}$. The
remaining $N-2K$ relays keep idle.

\item Time slot 3: The $K$ sources transmit their encoded symbols
$x_1^{(1)}(3),\cdots,x_K^{(1)}(3)$. The $K$ relays
$\pi_2(1),\cdots,\pi_2(K)$ forward their re-encoded symbols
$x_{\pi_2(1)}^{(2)}(2),\cdots,x_{\pi_2(K)}^{(2)}(2)$ to the
corresponding $K$ destinations. Another $K$ relays in
$\mathbf{\Pi}_1$ receive and decode $x_1^{(2)}(3),\cdots,$
$x_K^{(2)}(3)$ while being interfered with by
$\mathcal{R}_{\pi_2(1)},\cdots,\mathcal{R}_{\pi_2(K)}$. The $K$
destinations receive and decode
$x_{\pi_2(1)}^{(2)}(2),\cdots,x_{\pi_2(K)}^{(2)}(2)$ from
$\mathcal{R}_{\pi_2(1)},$ $\cdots,\mathcal{R}_{\pi_2(K)}$. The
remaining $N-2K$ relays keep idle.

\item The processes in time slots 2 and 3 are repeated to the
$(L-1)$th time slot.

\item Time slot $L$: The $K$ relays in $\mathbf{\Pi}_2$ forward
their re-encoded symbols $x_{\pi_2(1)}^{(2)}(L-1),\cdots,$
$x_{\pi_2(K)}^{(2)}(L-1)$ to the corresponding $K$ destinations.
The $K$ sources and the other $N-K$ relays remain idle.
\end{itemize}

At each odd time slot $l$ (i.e., $l=1,3,\cdots,L$), let us
consider the received signal at each selected relay for the first
hop and the received signal at each destination for the second
hop, respectively.

For the first hop (Phase 1), the received signal
$y_{\pi_1(i)}^{(1)}(l)\in\mathbb{C}$ at $\mathcal{R}_{\pi_1(i)}$
is given by
\begin{align}
y_{\pi_1(i)}^{(1)}(l) =&  \sum_{k=1}^K h_{\pi_1(i)k}^{(1)}
x_k^{(1)}(l) + \sum_{n=1}^K h_{\pi_1(i) \pi_2(n)}^{(r)}
x_{\pi_2(n)}^{(2)}(l-1) + z_{\pi_1(i)}^{(1)}(l), \label{EQ:y1}
\end{align}
where $x_k^{(1)}(l)$ and $x_{\pi_2(n)}^{(2)}(l-1)$ are the $l$th
transmit symbol of $\mathcal{S}_k$ and the $(l-1)$th transmit
symbol of $\mathcal{R}_{\pi_2(n)}$, respectively. As addressed
earlier, if relay $\mathcal{R}_{\pi_2(k)}$ successfully decodes
the received symbol, then it follows that
$x_{\pi_2(k)}^{(2)}(l-1)=x_k^{(1)}(l-1)$. The received signal
$y_{\pi_1(i)}^{(1)}(l)$ at $\mathcal{R}_{\pi_1(i)}$ is corrupted
by the independent and identically distributed (i.i.d.) and
circularly symmetric complex additive white Gaussian noise (AWGN)
$z_{\pi_1(i)}^{(1)}(l)$ having zero-mean and variance $N_0$. Note
that the second term in the right-hand side (RHS) of (\ref{EQ:y1})
indicates the inter-relay interference, which occurs when the $K$
relays in the set $\mathbf{\Pi}_1$, operating in receive mode,
listen to the sources, the relays are interfered with by the other
set $\mathbf{\Pi}_2$, operating in transmit mode. Note that when
$l=1$, relays have no symbols to transmit, and the second term in
the RHS of (\ref{EQ:y1}) becomes zero. Similarly when $l=L$,
sources do not transmit symbols, and the first term in the RHS of
(\ref{EQ:y1}) becomes zero.

For the second hop (Phase 2), assuming that the $K$ selected relay
nodes transmit their data packets simultaneously, the received
signal $y_k^{(2)}(l)\in\mathbb{C}$ at $\mathcal{D}_k$ is given by
\begin{align}
y_k^{(2)}(l)=\sum_{n=1}^K h_{k \pi_2(n)}^{(2)}
x_{\pi_2(n)}^{(2)}(l-1) + z_k^{(2)}(l), \label{EQ:y2}
\end{align}
where $z_k^{(2)}(l)$ is the i.i.d. AWGN having zero-mean and
variance $N_0$. We also note that when $l=1$, there are no signals
from relays.

Likewise, at each even time slot (i.e., $l=2,4,\cdots,L-1$), the
received signals at $\mathcal{R}_{\pi_2(i)}$ and $\mathcal{D}_k$
(i.e., the first and second hops) are given by
\begin{align}
y_{\pi_2(i)}^{(1)}(l)=& \sum_{k=1}^K h_{\pi_2(i)k}^{(1)}
x_k^{(1)}(l) + \sum_{n=1}^K h_{\pi_2(i) \pi_1(n)}^{(r)}
x_{\pi_1(n)}^{(2)}(l-1) + z_{\pi_2(i)}^{(1)}(l)
\nonumber 
\end{align}
and
\begin{align}
y_k^{(2)}(l)=\sum_{n=1}^K h_{k \pi_1(n)}^{(2)}
x_{\pi_1(n)}^{(2)}(l-1) + z_k^{(2)}(l), \nonumber 
\end{align}
respectively. The illustration of the aforementioned OND protocol
is geographically shown in Fig.~\ref{fig:dig} (two terms
$\tilde{L}_{\pi_1(k),k}$ and ${L}_{\pi_2(k),k}^{\mathbf{\Pi_2}}$
are specified later in the following relay selection steps).

Now, let us describe how to choose two types of relay sets,
$\mathbf{\Pi}_1$ and $\mathbf{\Pi}_2$, among $N$ relay nodes,
where $N$ is sufficiently large (the minimum $N$ required to
guarantee the DoF optimality will be analyzed in
Section~\ref{subsec:ana}).

\subsubsection{Step 1 (The First Relay Set Selection)}
\label{SubSub:1strelay}

Let us first focus on selecting the set
$\mathbf{\Pi}_1=\{\mathcal{R}_{\pi_1(1)},\cdots,\mathcal{R}_{\pi_1(K)}\}$,
operating in receive and transmit modes in odd and even time
slots, respectively. For every scheduling period, it is possible
for relay $\mathcal{R}_i$ to obtain all the channel coefficients
$h_{ik}^{(1)}$ and $h_{ki}^{(2)}$ by using a pilot signaling sent
from all of the source and destination nodes due to the channel
reciprocity before data transmission, where $i\in\{1,\cdots,N\}$
and $k\in\{1,\cdots,K\}$ (note that this is our local CSI
assumption). When $\mathcal{R}_i$ is assumed to serve the $k$th
S--D pair $(\mathcal{S}_k, \mathcal{D}_k)$, it then examines both
i) how much interference is received from the other sources and
ii) how much interference is generated by itself to the other
destinations, by computing the following scheduling metric
$\tilde{L}_{i,k}$:
\begin{equation} \label{eq:IL}
\tilde{L}_{i,k} = \sum_{m=1 \atop m\neq k}^{K}
\left(\left|h_{im}^{(1)}\right|^2 + \left|h_{mi}^{(2)}\right|^2
\right),
\end{equation}
where $i \in \{1, \ldots, N\}$ and $k \in  \{1, \ldots, K \}$. We
remark that the first term $\sum_{m=1, m\neq k}^{K}
\left|h_{im}^{(1)}\right|^2$ in (\ref{eq:IL}) denotes the sum of
interference power received at $\mathcal{R}_i$ for the first hop
(i.e., Phase 1). On the other hand, the second term $\sum_{m=1,
m\neq k}^{K} \left|h_{mi}^{(2)}\right|^2$ indicates the sum of
interference power generating at $\mathcal{R}_i$, which can be
interpreted as the \textit{leakage of interference} to the $K-1$
receivers expect for the corresponding destination, for the second
hop (i.e., Phase 2) under the same assumption.


Suppose that a short duration CTS (Clear to Send) message is
transmitted by the destination who finds its desired relay node
(or the master destination). Then according to the computed
metrics $\tilde{L}_{i,k}$ in (\ref{eq:IL}), a timer-based method
can be used for the relay selection similarly as
in~\cite{BletsasKhistiReedLippman}.\footnote{The reception of a
CTS message, which is transmitted from a certain destination,
triggers the initial timing process at each relay. Therefore, no
explicit timing synchronization protocol is required among the
relays~\cite{BletsasKhistiReedLippman,BletsasShinWin}. Moreover,
it is worth noting that the overhead of relay selection is a small
fraction of one transmission block with small collision
probability~\cite{BletsasKhistiReedLippman}. Since our relay
selection procedure is performed sequentially over all the S-D
pairs and the already selected relays for a certain S-D pair are
not allowed to take part in the selection process for another S-D
pair, the collision probability is thus at most $2K$ times that of
the single S-D pair case~\cite{BletsasKhistiReedLippman}.} Note
that the method based on the timer is considerably suitable in
distributed systems in the sense that information exchange among
all the relay nodes can be minimized. At the beginning of every
scheduling period, the relay $\mathcal{R}_i$ computes the set of
$K$ scheduling metrics,
$\{\tilde{L}_{i,1},\cdots,\tilde{L}_{i,K}\}$, and then starts its
own timer with $K$ initial values, which can be set to be
proportional to the $K$ metrics.\footnote{To avoid a situation
such that a malicious relay deliberately sets its timer to a
smaller value so as to win the chance, prior to the relay
selection process, a secret key may be shared among legitimate
nodes including relays. If a malicious relay who did not share the
key wants to participate in communication, then one can neglect
his/her message (e.g., RTS (Request to Send) message).} Thus,
there exist $NK$ metrics over the whole relay nodes, and we need
to compare them so as to determine who will be selected. The timer
of the relay $\mathcal{R}_{\pi_1(\hat{k})}$ with the least one
$\tilde{L}_{\pi_1(\hat{k}),\hat{k}}$ among $NK$ metrics will
expire first, where $\pi_1(\hat{k})\in\{1,\cdots,N\}$ and
$\hat{k}\in\{1,\cdots,K\}$. The relay then transmits a short
duration RTS message, signaling its presence, to the other $N-1$
relays, where each RTS message is composed of $\left\lceil{\log_2
K}\right\rceil$ bits to indicate which S--D pair the relay wants
to serve. Thereafter, the relay $\mathcal{R}_{\pi_1(\hat{k})}$ is
first selected to forward the $\hat{k}$th S--D pair's packet. All
the other relays are in listen mode while waiting for their timer
to be set to zero (i.e., to expire). At the stage of deciding who
will send the second RTS message, it is assumed that the other
relays are not allowed to communicate with the $\hat{k}$th S--D
pair, and thus the associated metrics
$\{\tilde{L}_{1,\hat{k}},\cdots,\tilde{L}_{\pi_1(\hat{k})-1,\hat{k}},\tilde{L}_{\pi_1(\hat{k})+1,\hat{k}},\cdots,\tilde{L}_{N,\hat{k}}\}$
are discarded with respect to timer operation. If another relay
has an opportunity to send the second RTS message of
$\left\lceil{\log_2 (K-1)}\right\rceil$ bits in order to declare
its presence, then it is selected to communicate with the
corresponding S--D pair. When such $K$ RTS messages, consisting of
at most $K\left\lceil{\log_2 K}\right\rceil$ bits, are sent out in
consecutive order, i.e., the set of $K$ relays,
$\mathbf{\Pi}_1=\{\mathcal{R}_{\pi_1(1)},\cdots,\mathcal{R}_{\pi_1(K)}\}$,
is chosen, the timer-based algorithm for the first relay set
selection terminates, yielding no RTS collision with high
probability. We remark that when $K=1$ (i.e., the single S--D pair
case), $K$ relay nodes are {\em arbitrarily} chosen as the first
relay set $\mathbf{\Pi}_1$ since there is no interference in this
step.

\subsubsection{Step 2 (The Second Relay Set Selection)}

Now let us turn to choosing the set of $K$ relay nodes (among
$N-K$ candidates), $\mathbf{\Pi}_2=\{\pi_2(1),\cdots,$
$\pi_2(K)\}$, operating in receive and transmit modes in even and
odd time slots, respectively. Using $K$ RTS messages broadcasted
from the $K$ relay nodes in the set $\mathbf{\Pi}_1$, it is
possible for relay node $\mathcal{R}_i\in \{1,\cdots,N\} \setminus
\mathbf{\Pi_1}$ to compute the sum of inter-relay interference
power generated from the relays in $\mathbf{\Pi}_1$, denoted by
$\sum_{k=1}^{K} \left|h_{i \pi_1(k)}^{(r)}\right|^2$.
When $\mathcal{R}_i$ is again assumed to serve the $k$th S--D pair
$(\mathcal{S}_k, \mathcal{D}_k)$, it examines both i) how much
interference is received from the undesired sources and the
selected relays in the set $\mathbf{\Pi}_1$ for the first hop and
ii) how much interference is generated by itself to the other
destinations by computing the following metric
$L_{i,k}^{\mathbf{\Pi_2}}$, termed \textit{total interference
level~(TIL)}:
\begin{align}  
L_{i,k}^{\mathbf{\Pi}_2} &= \tilde{L}_{i,k} + \sum_{k=1}^{K}
\left|h_{i
\pi_1(k)}^{(r)}\right|^2 \nonumber\\
&= \sum_{m=1 \atop m\neq k}^{K} \left(\left|h_{im}^{(1)}\right|^2
+ \left|h_{mi}^{(2)}\right|^2 \right) + \sum_{k=1}^{K} \left|h_{i
\pi_1(k)}^{(r)}\right|^2, \label{eq:TIL}
\end{align}
where $i \in \{1, \ldots, N\}$ and $k \in  \{1, \ldots, K \}$. We
note that Steps 1 and 2 cannot be exchangeable due to the fact
that the inter-relay interference term $\sum_{k=1}^{K} \left|h_{i
    \pi_1(k)}^{(r)}\right|^2$ is measured after determining the
    first relay set $\mathbf{\Pi}_1$. If the relay set selection
    order is switched, then the metric TIL in (\ref{eq:TIL}) will not be
    available.

According to the computed TIL $L_{i,k}^{\mathbf{\Pi_2}}$, we also
apply the timer-based method used in Step 1 for the second relay
set selection. The relay $\mathcal{R}_i\in \{1,\cdots,N\}
\setminus \mathbf{\Pi_1}$ computes the set of $K$ TILs,
$\{L_{i,1}^{\mathbf{\Pi_2}},\cdots,L_{i,K}^{\mathbf{\Pi_2}}\}$,
and then starts its timer with $K$ initial values, proportional to
the $K$ TILs. Thus, we need to compare $(N-K)K$ TIL metrics over
the relay nodes in the set $\{1,\cdots,N\} \setminus
\mathbf{\Pi_1}$ in order to determine who will be selected as the
second relay set. The rest of the relay set selection protocol
(i.e., RTS message exchange among relay nodes) almost follows the
same line as that of Step 1. The timer-based algorithm for the
second relay set selection terminates when $K$ RTS messages are
sent out in consecutive order. Then, $K$ relay nodes having a
sufficiently small amount of TIL $L_{i,k}^{\mathbf{\Pi_2}}$ are
selected as the second relay set $\mathbf{\Pi}_2$.

\begin{remark}
Owing to the channel reciprocity of TDD systems, the sum of
inter-relay interference power received at any relay $R_i \in
\mathbf{\Pi_1}$, $\sum_{k=1}^{K} \left|h_{i
\pi_2(k)}^{(r)}\right|^2$, also turns out to be sufficiently small
when $N$ is large. That is, it is also guaranteed that $K$
selected relays in the set $\mathbf{\Pi}_1$ have a sufficiently
small amount of TIL.
\end{remark}

\begin{remark}
The overhead of each scheduling time slot (i.e., the total number
of bits required for exchanging RTS messages among the relay
nodes) can be made arbitrarily small, compared to one transmission
block. From the fact that $K$ RTS messages, consisting of at most
$K\left\lceil{\log_2 K}\right\rceil$ bits, are sent out in each
relay set selection step, only $2K\left\lceil{\log_2
K}\right\rceil$ bit transmission could suffice.
\end{remark}

\subsubsection{Step 3 (Data Transmission)}

The $2K$ selected relays request data transmission to their
desired source nodes. Each source ($\mathcal{S}_k$) then starts to
transmit data to the corresponding destination ($\mathcal{D}_k$)
via one of its two relay nodes alternately
($\mathcal{R}_{\pi_1(k)}$ or $\mathcal{R}_{\pi_2(k)}$), which was
specified earlier. If the TILs of the selected relays are
arbitrarily small, then i) the associated undesired source--relay
and relay--destination channel links and ii) the inter-relay
channel links are all in deep fade. In Section~\ref{subsec:ana},
we will show that it is possible to choose such relays with the
help of the multiuser diversity gain.

At the receiver side, each relay or destination detects the signal
sent from its desired transmitter, while simply treating
interference as Gaussian noise. Thus, no multiuser detection is
performed at each receiver, thereby resulting in an easier
implementation.

%
%


\subsection{Analysis of a Lower Bound on the DoF} \label{subsec:ana}

In this subsection, using the scaling argument bridging between
the number of relays, $N$, and the received SNR (refer
to~\cite{JPS:12:COM,H_Yang13_TWC,Codebook_OIA,ODIA_TSP14} for the
details), we shall show 1) the lower bound on the DoF of the
$K\times N \times K$ channel with interfering relays as $N$
increases and 2) the minimum $N$ required to guarantee the
achievability result. The total number of DoF, denoted by
$\textsf{DoF}_\textsf{total}$, is defined
as~\cite{Jafar_IA_original}
\begin{equation}
\textsf{DoF}_\textsf{total} = \sum_{k=1}^K
\left(\lim_{\snr\rightarrow\infty}\frac{T_k(\snr)}{\log
\snr}\right), \nonumber
\end{equation}
where $T_k(\snr)$ denotes the transmission rate of source
$\mathcal{S}_k$. Using the OND framework in the $K\times N\times
K$ channel with interfering relays where $L$ transmission slots
per block are used, the achievable $\textsf{DoF}_\textsf{total}$
is lower-bounded by
\begin{align}
&\textsf{DoF}_\textsf{total}\ge \frac{L-1}{L}\sum_{k=1}^K
\sum_{s=1}^2 \left(\lim_{\snr\rightarrow\infty}
\frac{\frac{1}{2}\log\left(1+\min\left\{\sinr_{\pi_s(k)}^{(1)},\sinr_{k,\pi_s(k)}^{(2)}\right\}\right)}{\log
\snr}\right), \label{EQ:DoF_OND}
\end{align}
where $\sinr_{\pi_s(k)}^{(1)}$ denotes the received
signal-to-interference-and-noise ratio (SINR) at the relay
$\mathcal{R}_{\pi_s(k)}$ and $\sinr_{k,\pi_s(k)}^{(2)}$ denotes
the received SINR at the destination $\mathcal{D}_k$ when the
relay $\mathcal{R}_{\pi_s(k)}$ transmits the desired signal
($s=1,2$ and $k\in\{1,\cdots,K\}$). More specifically, the above
SINRs can be formally expressed as\footnote{Note that at the first
time slot for the relays $\{\pi_1(i)\}_{i=1}^{K}$, the third term
in the denominator of $\sinr_{\pi_1(i)}^{(1)}$ (i.e., the
inter-relay interference term) becomes zero.}
\begin{align*}
   &\sinr_{\pi_s(i)}^{(1)}
    =
    \frac{
        P\left|h_{\pi_s(i)i}^{(1)}\right|^2
    }{
        N_0
        +
        P
        \sum_{k=1 \atop k \neq i}^K
            \Norm{h_{\pi_s(i)k}^{(1)}}^2
        +
        P
        \sum_{k=1}^K
            \Norm{h_{\pi_s(i),\pi_{\tilde{s}}(k)}^{(r)}}^2
    } \\
      &\sinr_{i,\pi_s(i)}^{(2)}
    =
        \frac{
            P\left|h_{i\pi_s(i)}^{(2)}\right|^2
        }{
            1
            +
            P
            \sum_{k=1\atop k\neq i}^K
            \Norm{h_{i\pi_s(k)}^{(2)}}^2
        },
\end{align*}
where the second term in the denominator of
$\sinr_{\pi_s(i)}^{(1)} $ indicates the interference power at
relay $\pi_s(i)$ received from the sources while the third term
indicates the inter-relay interference, and  the second term in
the denominator of $\sinr_{i,\pi_s(i)}^{(2)}$ indicates the
interference power at the destination $\mathcal{D}_i$ received
from the active relays. Here, $\tilde{s}=3-s$, i.e., $\tilde{s}=2$
if $s=1$, and vice versa.

We focus on the first relay set ${\bf \Pi}_1$'s perspective to
examine the received SINR values according to each time slot. Let
us first denote $L_{\pi_1(i),i}^{\mathbf{\Pi_1}} \triangleq
\tilde{L}_{\pi_1(i),i}+
\sum_{k=1}^K\Norm{h_{\pi_1(i),\pi_2(k)}^{(r)}}^2$ for
$i\in\{1,\cdots,K\}$. For the first hop, at time slot $2t-1$
(i.e., each odd time slot), $t \in \left\{1,2,\ldots,
\frac{L-1}{2}\right\}$, the received $\sinr_{\pi_1(i)}^{(1)}$ at
$\mathcal{R}_{\pi_1(i)}^{(1)}$ is lower-bounded by
\begin{align}
   \sinr_{\pi_1(i)}^{(1)} &\ge
    \frac{
        \snr\Norm{h_{\pi_1(i)i}^{(1)}}^2
    }{
        1
        \!\!+\!\!
        \snr
        \sum_{k=1\atop k\neq i}^K \!\!
        \left(
            \Norm{h_{\pi_1(i)k}^{(1)}}^2
           \!\! +\!\!
            \Norm{h_{k\pi_1(i)}^{(2)}}^2
        \right)
       \!\! + \!\!
        \snr
        \sum_{k=1}^K \!\!
            \Norm{h_{\pi_1(i),\pi_2(k)}^{(r)}}^2
    }
    \nonumber\\
    &=
    \frac{
        \snr
        \Norm{h_{\pi_1(i)i}^{(1)}}^2
    }{
        1
        +
        \snr
        \left(
            \tilde{L}_{\pi_1(i),i}
            +
            \sum_{k=1}^K
                \Norm{h_{\pi_1(i),\pi_2(k)}^{(r)}}^2
        \right)
    }
       \nonumber\\
    &=\frac{
        \snr
        \Norm{h_{\pi_1(i)i}^{(1)}}^2
    }{
        1
        +
        \snr
       L_{\pi_1(i),i}^{\mathbf{\Pi_1}}
    }
    \nonumber\\
    &\ge\frac{
        \snr
        \Norm{h_{\pi_1(i)i}^{(1)}}^2
    }{
        1
        +\snr
        \sum_{i=1}^K
       L_{\pi_1(i),i}^{\mathbf{\Pi_1}}
    },
\label{EQ:SINR_1_pi1}
\end{align}
where $\tilde{L}_{\pi_1(i),i}$ indicates the scheduling metric in
(\ref{eq:IL}) when $\mathcal{R}_{\pi_1(i)}$ is assumed to serve
the $i$th S--D pair ($\mathcal{S}_i$, $\mathcal{D}_i$). For the
second hop, at time slot $2t$ (i.e., each even time slot), $t \in
\left\{1,2,\ldots, \frac{L-1}{2}\right\}$, the received
$\sinr_{i,\pi_1(i)}^{(2)}$ at $\mathcal{D}_{i}$ is lower-bounded
by
\begin{align}
    \sinr_{i,\pi_1(i)}^{(2)}
    &\ge
        \frac{
            \snr
            \Norm{h_{i\pi_1(i)}^{(2)}}^2
        }{
            1
            +
            \snr
            \sum_{i=1}^K
            \sum_{k=1\atop k\neq i}^K
                \Norm{h_{i\pi_1(k)}^{(2)}}^2 } \nonumber\\
&\ge
    \frac{
        \snr
        \Norm{h_{i\pi_1(i)}^{(2)}}^2
    }{
        1
        +
        \snr
        \sum_{i=1}^K
        \sum_{k=1 \atop k\neq i}^K
            \left(
                \Norm{h_{\pi_1(i)k}^{(1)}}^2
                +
                \Norm{h_{k\pi_1(i)}^{(2)}}^2
            \right)
    }
 \nonumber\\
    &=
    \frac{
        \snr
        \Norm{h_{i\pi_1(i)}^{(2)}}^2
    }{
        1
        +
        \snr
        \sum_{i=1}^K
            \LIF{i}{i}
    }
    \nonumber\\
     &\ge
    \frac{
        \snr
        \Norm{h_{i\pi_1(i)}^{(2)}}^2
    }{
        1
        +
        \snr
        \sum_{i=1}^K
            L_{\pi_1(i),i}^{\mathbf{\Pi_1}}
    },
\label{EQ:SINR_2_pi1}
\end{align}
where the second inequality holds due to the channel reciprocity.
The term $\sum_{i=1}^K L_{\pi_1(i),i}^{\mathbf{\Pi_1}}$ in the
denominator of (\ref{EQ:SINR_1_pi1}) and (\ref{EQ:SINR_2_pi1})
needs to scale as $\snr^{-1}$, i.e., $\sum_{i=1}^K
L_{\pi_1(i),i}^{\mathbf{\Pi_1}}=O(\snr^{-1})$, so that both
$\sinr_{\pi_1(k)}^{(1)}$ and $\sinr_{k,\pi_1(k)}^{(2)}$ scale as
$\Omega(\snr)$ with increasing SNR, which eventually enables to
achieve the DoF of $\frac{L-1}{L}$ per S--D pair from
(\ref{EQ:DoF_OND}).\footnote{We use the following notation: i)
$f(x)=O(g(x))$ means that there exist constant $C$ and $c$ such
that $f(x)\leq Cg(x)$ for all $x>c$. ii) $f(x)=\Omega(g(x))$ if
$g(x)=O(f(x))$. iii) $f(x)=\omega(g(x))$ means that
$\lim_{x\rightarrow \infty}\frac{g(x)}{f(x)}=0$~\cite{Knuth}.}
Even if such a bounding technique in (\ref{EQ:SINR_1_pi1}) and
(\ref{EQ:SINR_2_pi1}) leads to a loose lower bound on the SINR, it
is sufficient to prove our achievability result in terms of DoF
and relay scaling law.

Now, let us turn to the second relay set ${\bf \Pi}_2$. Similarly
as in (\ref{EQ:SINR_1_pi1}), for the first hop, at time slot $2t$,
$t \in \left\{1,2,\ldots, \frac{L-1}{2}\right\}$, the received
$\sinr_{\pi_2(i)}^{(1)}$ at $\mathcal{R}_{\pi_2(i)}^{(1)}$ is
lower-bounded by
\begin{align}
    \textsf{sinr}_{\pi_2(i)}^{(1)} \ge\frac{
        \snr
        \Norm{h_{\pi_2(i)i}^{(1)}}^2
    }{
        1
        +\snr
        \sum_{i=1}^K
       L_{\pi_2(i),i}^{\mathbf{\Pi_2}}
    }, \label{EQ:SINR_1_pi2}
\end{align}
where $L_{\pi_2(i),i}^{\mathbf{\Pi_2}}$ indicates the TIL in
\eqref{eq:TIL} when $\mathcal{R}_{\pi_2(i)}$ is assumed to serve
the $i$th S--D pair ($\mathcal{S}_i$, $\mathcal{D}_i$). For the
second hop, at time slot $2t+1$, $t \in \left\{1,2,\ldots,
\frac{L-1}{2}\right\}$, the received $\sinr_{i,\pi_2(i)}^{(2)}$ at
$\mathcal{D}_{i}$ can also be lower-bounded by
\begin{align}
\sinr_{i,\pi_2(i)}^{(2)} \ge
\frac{\snr\left|h_{i\pi_2(i)}^{(2)}\right|^2}{1+\snr\sum_{i=1}^K
L_{\pi_2(i),i}^{\mathbf{\Pi_2}}}. \label{EQ:SINR_2_pi2}
\end{align}

The next step is thus to characterize the three metrics
$\tilde{L}_{i,k}$, $L_{i,k}^{\mathbf{\Pi_1}}$, and
$L_{i,k}^{\mathbf{\Pi_2}}$ ($i\in\{1,\ldots,N\}$ and
$k\in\{1,\ldots,K\}$) and their cumulative density functions
(CDFs) in the $K\times N\times K$ channel with interfering relays,
which is used to analyze the lower bound on the DoF and the
required relay scaling law in the model under consideration. Since
it is obvious to show that the CDF of $L_{i,k}^{\mathbf{\Pi_1}}$
is identical to that of $L_{i,k}^{\mathbf{\Pi_2}}$, we focus only
on the characterization of $L_{i,k}^{\mathbf{\Pi_2}}$. The
scheduling metric $\tilde{L}_{i,k}$ follows the chi-square
distribution with $2(2K-2)$ degrees of freedom since it represents
the sum of i.i.d. $2K-2$ chi-square random variables with 2
degrees of freedom. Similarly, the TIL $L_{i,k}^{\mathbf{\Pi_2}}$
follows the chi-square distribution with $2(3K-2)$ degrees of
freedom. The CDFs of the two metrics $\tilde{L}_{i,k}$ and
$L_{i,k}^{\mathbf{\Pi_2}}$ are given by
\begin{align}
    \mathcal{F}_{\tilde{L}}\left(\ell\right)=\frac{\gamma(2K-2,\ell/2)}{\Gamma(2K-2)}
    \label{eq:Ltilde:cdf}
\\
    \mathcal{F}_{L}\left(\ell\right)=\frac{\gamma(3K-2,\ell/2)}{\Gamma(3K-2)},
\label{eq:L:cdf}
\end{align}
respectively, where $\Gamma(z)=\int_0^{\infty}t^{z-1}e^{-t}dt$ is
the Gamma function and $\gamma(z,x)=\int_0^x t^{z-1}e^{-t}dt$ is
the lower incomplete Gamma
function~\cite[eqn.~(8.310.1)]{GradshteynRyzhik}. We start from
the following lemma.

\begin{lemma} \label{lem:1}
For any $0< \ell \leq 2$, the CDFs of the random variables
$\tilde{L}_{i,k}$ and $L_{i,k}^{\mathbf{\Pi_2}}$  in
\eqref{eq:Ltilde:cdf} and \eqref{eq:L:cdf} are lower-bounded by
$\mathcal{F}_{\tilde{L}}\left(\ell\right)\ge C_1 \ell^{2K-2}$ and
$\mathcal{F}_{L}\left(\ell\right)\ge C_2 \ell^{3K-2}$,
respectively, where
\begin{align}\label{eq:C1}
    C_1= \frac{e^{-1}2^{-(2K-2)}}{\Gamma(2K-1)}
    \\
    C_2= \frac{e^{-1}2^{-(3K-2)}}{\Gamma(3K-1)}, \label{eq:C2}
\end{align}
and $\Gamma(z)$ is the Gamma function.
\end{lemma}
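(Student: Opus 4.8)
The plan is to work directly from the integral representation $\gamma(z,x)=\int_0^x t^{z-1}e^{-t}\,dt$ and to exploit that the relevant argument $x=\ell/2$ satisfies $0<x\le 1$ precisely because $0<\ell\le 2$. On such an interval the exponential weight admits a single constant lower bound: since $e^{-t}$ is decreasing, one has $e^{-t}\ge e^{-x}\ge e^{-1}$ for all $t\in[0,x]$ whenever $x\le 1$. This is the only place where the hypothesis $\ell\le 2$ enters, and it is exactly what turns the incomplete Gamma integral into an elementary monomial integral.

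First I would establish, for any $z>0$ and any $0<x\le 1$, the single bound
\[
\gamma(z,x)\;\ge\; e^{-1}\int_0^x t^{z-1}\,dt \;=\; \frac{e^{-1}x^z}{z},
\]
obtained by pulling the uniform factor $e^{-1}$ outside the integral. Substituting $x=\ell/2$ together with $z=2K-2$ into \eqref{eq:Ltilde:cdf}, and $z=3K-2$ into \eqref{eq:L:cdf}, immediately produces lower bounds on both CDFs of the form $(\textrm{constant})\cdot\ell^{2K-2}$ and $(\textrm{constant})\cdot\ell^{3K-2}$.

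The remaining work is purely to match the constants. Using $(\ell/2)^{z}=2^{-z}\ell^{z}$ and the Gamma recurrence $z\,\Gamma(z)=\Gamma(z+1)$ --- in the form $(2K-2)\Gamma(2K-2)=\Gamma(2K-1)$ and $(3K-2)\Gamma(3K-2)=\Gamma(3K-1)$ --- the denominators collapse so that the prefactors reduce exactly to $C_1$ in \eqref{eq:C1} and $C_2$ in \eqref{eq:C2}, yielding $\mathcal{F}_{\tilde{L}}(\ell)\ge C_1\ell^{2K-2}$ and $\mathcal{F}_{L}(\ell)\ge C_2\ell^{3K-2}$. I do not expect any genuine obstacle: the argument is a one-line estimate on the exponential followed by a bookkeeping step on the Gamma function. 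The only subtlety worth flagging is that the bound is deliberately restricted to the small-$\ell$ regime $\ell\le 2$, which is sufficient because the downstream relay-scaling analysis only needs a polynomial lower bound on the CDF near the origin to guarantee that a selected relay's total interference level can be made arbitrarily small with high probability.
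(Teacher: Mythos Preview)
Your argument is correct. The paper itself omits the proof and simply points to~\cite[Lemma~1]{JPS:12:COM}, but the standard derivation there is precisely the one you outline: bound $e^{-t}\ge e^{-1}$ on $[0,\ell/2]\subset[0,1]$, integrate the remaining monomial, and absorb the factor $z$ into $\Gamma(z)$ via $z\Gamma(z)=\Gamma(z+1)$ to obtain $C_1$ and $C_2$.
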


\begin{proof}
The detailed proof of this argument is omitted here since it
essentially follows the similar line to the proof of~\cite[Lemma
1]{JPS:12:COM} with a slight modification.
\end{proof}

In the following theorem, we establish our first main result by
deriving the lower bound on the total DoF in the $K\times N \times
K$ channel with interfering relays.
\begin{theorem} \label{thrm:1}
Suppose that the OND scheme with alternate relaying is used for
the $K\times N \times K$ channel with interfering relays. Then,
for $L$ data transmission time slots,
\begin{equation}
\textsf{DoF}_\textsf{total} \ge \frac{(L-1)K}{L} \nonumber
\end{equation}
is achievable if $N = \omega\left( \snr^{3K-2}\right)$.
\end{theorem}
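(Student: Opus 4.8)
The plan is to begin from the DoF lower bound \eqref{EQ:DoF_OND} and show that, under $N=\omega(\snr^{3K-2})$, each of its $2K$ summands tends to $\frac{1}{2}$. Since $\frac{1}{2}\log(1+x)/\log\snr\to\frac{1}{2}$ whenever $x=\Omega(\snr)$, it suffices to prove that all four SINRs in \eqref{EQ:SINR_1_pi1}--\eqref{EQ:SINR_2_pi2} scale as $\Omega(\snr)$ on an event of probability approaching one. In each bound the numerator is $\snr\,\Norm{h}^2$, where the desired gains $\Norm{h_{\pi_s(i)i}^{(1)}}^2$ and $\Norm{h_{i\pi_s(i)}^{(2)}}^2$ are unit-mean exponential variables that never enter the scheduling metrics \eqref{eq:IL}--\eqref{eq:TIL}, hence remain positive, independent of the selection, and contribute no SNR scaling; meanwhile each denominator equals $1+\snr\sum_{i=1}^K L_{\pi_s(i),i}^{\mathbf{\Pi_s}}$. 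Thus the entire theorem reduces to the single scaling statement $\sum_{i=1}^K L_{\pi_s(i),i}^{\mathbf{\Pi_s}}=O(\snr^{-1})$ with high probability for $s=1,2$.

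First I would dispatch the second relay set. Fix a threshold $\epsilon=c\,\snr^{-1}$; for large $\snr$ one has $\epsilon\le 2$, so Lemma~\ref{lem:1} yields $\mathcal{F}_{L}(\epsilon)\ge C_2\,\epsilon^{3K-2}$. Because the scheduler assigns distinct relays to distinct pairs and removes at most $2K-1$ relays, each S--D pair keeps at least $N-2K+1$ admissible candidates whose metrics $L_{i,k}^{\mathbf{\Pi_2}}$ are, conditioned on the already-fixed $\mathbf{\Pi}_1$, i.i.d. The probability that no candidate of a given pair falls below $\epsilon$ is at most $\left(1-C_2\,\epsilon^{3K-2}\right)^{N-2K+1}\le\exp\!\left(-(N-2K+1)\,C_2\,\epsilon^{3K-2}\right)$, whose exponent is proportional to $N\,\snr^{-(3K-2)}$ and therefore diverges exactly when $N=\omega(\snr^{3K-2})$; moreover the expected number of sub-$\epsilon$ candidates per pair, $\sim N\,C_2\,\epsilon^{3K-2}$, also diverges, so the greedy per-pair assignment leaves a sub-$\epsilon$ relay for every pair. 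A union bound over the $K$ pairs drives the failure probability to zero, and on the complementary event every $\pi_2(i)$ satisfies $L_{\pi_2(i),i}^{\mathbf{\Pi_2}}\le\epsilon$, giving $\sum_{i=1}^K L_{\pi_2(i),i}^{\mathbf{\Pi_2}}\le K\epsilon=O(\snr^{-1})$.

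For the first set the subtlety is that $\mathbf{\Pi}_1$ is chosen to minimize $\tilde{L}_{i,k}$ in \eqref{eq:IL}, which omits inter-relay interference, whereas its SINR depends on the full metric $L_{\pi_1(i),i}^{\mathbf{\Pi_1}}=\tilde{L}_{\pi_1(i),i}+\sum_{k=1}^K\Norm{h_{\pi_1(i),\pi_2(k)}^{(r)}}^2$. I would split $\sum_{i=1}^K L_{\pi_1(i),i}^{\mathbf{\Pi_1}}$ into its two pieces. The $\tilde{L}$ piece $\sum_{i=1}^K\tilde{L}_{\pi_1(i),i}$ is handled by repeating the extreme-value argument of the previous paragraph with the bound $\mathcal{F}_{\tilde{L}}(\epsilon)\ge C_1\,\epsilon^{2K-2}$ from Lemma~\ref{lem:1}; this needs only $N=\omega(\snr^{2K-2})$, which is implied by the hypothesis since $2K-2\le 3K-2$. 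For the inter-relay piece, channel reciprocity of the TDD system gives $\Norm{h_{\pi_1(i),\pi_2(k)}^{(r)}}^2=\Norm{h_{\pi_2(k),\pi_1(i)}^{(r)}}^2$, so that $\sum_{i=1}^K\sum_{k=1}^K\Norm{h_{\pi_1(i),\pi_2(k)}^{(r)}}^2=\sum_{k=1}^K\sum_{i=1}^K\Norm{h_{\pi_2(k),\pi_1(i)}^{(r)}}^2\le\sum_{k=1}^K L_{\pi_2(k),k}^{\mathbf{\Pi_2}}=O(\snr^{-1})$, the last bound being exactly what Paragraph~2 established (this is the content of Remark~1). Summing the two pieces gives $\sum_{i=1}^K L_{\pi_1(i),i}^{\mathbf{\Pi_1}}=O(\snr^{-1})$ with high probability.

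Substituting both scalings into \eqref{EQ:SINR_1_pi1}--\eqref{EQ:SINR_2_pi2} makes every denominator $1+\snr\cdot O(\snr^{-1})=O(1)$ and every numerator $\Omega(\snr)$, so all $2K$ SINRs are $\Omega(\snr)$ on an event whose probability tends to one as $\snr\to\infty$; consequently each summand of \eqref{EQ:DoF_OND} converges to $\frac{1}{2}$ and the total is lower-bounded by $\frac{L-1}{L}\cdot 2K\cdot\frac{1}{2}=\frac{(L-1)K}{L}$. I expect the main obstacle to be the coupled treatment of the two relay sets in Paragraph~3: one must check that conditioning on the already-selected $\mathbf{\Pi}_1$ preserves the independence needed for the extreme-value estimate on $\mathbf{\Pi}_2$, and that the reciprocity identity transfers the $O(\snr^{-1})$ control from $\mathbf{\Pi}_2$ back to the interference seen by $\mathbf{\Pi}_1$. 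The remaining SINR manipulations and the $\frac{1}{2}\log$ bookkeeping are routine and are already carried out in \eqref{EQ:SINR_1_pi1}--\eqref{EQ:SINR_2_pi2}.
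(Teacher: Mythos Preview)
Your proposal is correct and follows essentially the same approach as the paper: both handle $\mathbf{\Pi}_2$ directly via an extreme-value argument using Lemma~\ref{lem:1} with the $3K-2$ exponent, then split the first-set metric $L_{\pi_1(i),i}^{\mathbf{\Pi_1}}$ into the $\tilde{L}$ part (controlled by the $2K-2$ exponent) and the inter-relay part (controlled by reciprocity and the already-established $\mathbf{\Pi}_2$ bound). The only cosmetic difference is that the paper partitions $\{1,\ldots,N\}$ into two disjoint candidate pools $\mathcal{N}_1,\mathcal{N}_2$ a priori to get clean independence between the two selection events, whereas you rely on conditioning on the fixed $\mathbf{\Pi}_1$; both devices lead to the same scaling conclusion.
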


\begin{proof}
From (\ref{EQ:DoF_OND})--(\ref{EQ:SINR_2_pi2}), the OND scheme
achieves $\textsf{DoF}_\textsf{total}\ge \frac{L-1}{L}K$ provided
that the two values $\snr\sum_{i=1}^K
L_{\pi_1(i),i}^{\mathbf{\Pi_1}}$ and $\snr\sum_{i=1}^K
L_{\pi_2(i),i}^{\mathbf{\Pi_2}}$ are less than or equal to some
constant $\epsilon_0>0$, independent of SNR, for all S--D pairs.
Then, a lower bound on the achievable
$\textsf{DoF}_\textsf{total}$ is given by
\begin{align}
\textsf{DoF}_\textsf{total} \ge
\mathcal{P}_{\mathrm{OND}}\frac{(L-1)K}{L}, \nonumber
\end{align}
which indicates that $\frac{L-1}{L}K$ DoF is achievable for a
fraction $\mathcal{P}_{\mathrm{OND}}$ of the time for actual
transmission, where
\begin{align}\label{eq:POND:def0}
    &\mathcal{P}_{\mathrm{OND}} =
        \lim_{\snr \to \infty} \!\!
                \Pro{
                    \snr
                    \sum_{i=1}^K L_{\pi_1(i),i}^{\mathbf{\Pi_1}}
                    \leq
                    \epsilon_0
                \textrm{ and }
                    \snr
                    \sum_{i=1}^K  L_{\pi_2(i),i}^{\mathbf{\Pi_2}}
                    \leq
                    \epsilon_0
                }.
\end{align}

We now examine the relay scaling condition such that
$\mathcal{P}_{\mathrm{OND}}$ converges to one with high
probability. For the simplicity of the proof, suppose that the
first and the second relay sets ${\bf \Pi}_1$ and ${\bf \Pi}_2$
are selected out of two mutually exclusive relaying candidate sets
$\mathcal{N}_1$ and $\mathcal{N}_2$, respectively, i.e.,
$\mathcal{N}_1, \mathcal{N}_2\subset \{1,\ldots, N\}$,
$\mathcal{N}_1\cap \mathcal{N}_2 = \emptyset$, $\mathcal{N}_1 \cup
\mathcal{N}_2 = \{1,\ldots, N\}$, ${\bf
\Pi}_1\subset\mathcal{N}_1$, and ${\bf
\Pi}_2\subset\mathcal{N}_2$. Then, we are interested in how
$|\mathcal{N}_1|$ and $|\mathcal{N}_2|$ scale with SNR in order to
guarantee that $\mathcal{P}_{\mathrm{OND}}$ tends to one, where
$|\mathcal{N}_s|$ denotes the cardinality of $\mathcal{N}_s$ for
$s=1,2$. From (\ref{eq:POND:def0}), we further have
\begin{align}\label{eq:POND:def}
    \mathcal{P}_{\mathrm{OND}}
    =
        \lim_{\snr \to \infty}
           & \Bigg(
                \Pro{
                    \snr
                    \sum_{i=1}^K L_{\pi_1(i),i}^{\mathbf{\Pi_1}}
                    \leq
                    \epsilon_0
                     }
                \Pro{
                    \snr
                    \sum_{i=1}^K  L_{\pi_2(i),i}^{\mathbf{\Pi_2}}
                    \leq
                    \epsilon_0
                }
            \Bigg).
\end{align}

Let $\CanSB_m\triangleq \mathcal{N}_2 \setminus
\{\pi_{2}(\ell)\}_{\ell=1}^{m-1}$ with
$\{\pi_{2}(\ell)\}_{\ell=1}^{0}=\emptyset$ and $\Norm{\CanSB_m}$
be the candidate set associated with the second relay set and the
$m$th S-D pair and its cardinality, respectively. For a constant
$\epsilon_0>0$, we can bound the second term in
\eqref{eq:POND:def} as follows:
\begin{align}\label{eq:POND:2}
   & \Pro{
        \sum_{i=1}^K
            L_{\pi_2(i),i}^{\mathbf{\Pi_2}}
    \leq
        \frac{\epsilon_0}{\snr}
        }     \nonumber\\
    & \geq
        \Pro{
            \max_{1\leq i \leq K} L_{\pi_2(i),i}^{\mathbf{\Pi_2}}
            \leq
            \frac{\epsilon_0}{K\snr}
        }
    \nonumber\\
    &\mathop \geq \limits^{\mathrm{(a)}}
        1
        -
        \Pro{
            \exists i: L_{\pi_2(i),i}^{\mathbf{\Pi_2}}
            \geq
            \frac{\epsilon_0}{K\snr}
        }
    \nonumber\\
    &\mathop \geq \limits^{\mathrm{(b)}}
        1
        -
        \sum_{i=1}^{K}
            \Pro{
                L_{\pi_2(i),i}^{\mathbf{\Pi_2}}
                \geq
                \frac{\epsilon_0}{K\snr}
            }
    \nonumber\\
    &\mathop \geq \limits^{\mathrm{(c)}}
        1
        -
        K
        \Pro{
            \min_{j\in\mathcal{N}_2}L_{j,i}^{\mathbf{\Pi_2}}
                \geq
                \frac{\epsilon_0}{K\snr}
        }
    \nonumber\\
    &\mathop \geq \limits^{\mathrm{(d)}}
        1
        -
        K
        \left(
            1
            -
            \mathcal{F}_{L}
            \left(
                \frac{\epsilon_0}{K\snr}
            \right)
        \right)^{\Norm{\CanSB_i}}
    \nonumber\\
    &\mathop \geq \limits^{\mathrm{(e)}}
        1
        -
        K
        \left(
            1
            -
            C_2
            \left(
                \frac{\epsilon_0}{K\snr}
            \right)^{3K-2}
        \right)^{\Norm{\mathcal{N}_2}-K+1},
\end{align}
where the inequality $\mathrm{(a)}$ holds from the De Morgan's
law; $\mathrm{(b)}$ follows from the union bound; $\mathrm{(c)}$
follows since
$L_{\pi_2(i),i}^{\mathbf{\Pi_2}}=\min_{j\in\mathcal{N}_2}L_{j,i}^{\mathbf{\Pi_2}}$;
$\mathrm{(d)}$ follows since $L_{j,i}^{\mathbf{\Pi_2}}$ are the
i.i.d. random variables $\forall j\in \mathcal{N}_2$ for a given
$i$, owning to the fact that the channels are i.i.d. variables;
and $\mathrm{(e)}$ follows from Lemma~\ref{lem:1} with
$C_2=\frac{e^{-1}2^{-(3K-2)}}{\Gamma(3K-1)}$ since
$0<\frac{\epsilon_0}{\snr}\leq 2$  as $\snr \to \infty$ and from
the fact that $\Norm{\CanSB_i}\geq \Norm{\mathcal{N}_2}-K+1$.

We now pay our attention to the first term in \eqref{eq:POND:def},
which can be bounded by
\begin{align}\label{eq:POND:1}
    \Pro{
        \sum_{i=1}^K
            L_{\pi_1(i),i}^{\mathbf{\Pi_1}}
        \leq
        \frac{\epsilon_0}{\snr}
        }
    &\geq
        \Pro{\max_{1\leq i \leq K}L_{\pi_1(i),i}^{\mathbf{\Pi_1}} \leq \frac{\epsilon_0}{K\snr}}
    \nonumber\\
     &=
        \left(\Pro{L_{\pi_1(i),i}^{\mathbf{\Pi_1}} \leq \frac{\epsilon_0}{K\snr}}
        \right)^{K},
\end{align}
where the equality follows from the fact that
$L_{\pi_1(i),i}^{\mathbf{\Pi_1}}$ and
$L_{\pi_1(j),j}^{\mathbf{\Pi_1}}$ for $i \neq j$ are the functions
of different random variables and thus are independent of each
other.
By letting
$K_{i}=\sum_{k=1}^K\Norm{h_{\pi_1(i),\pi_2(k)}^{(r)}}^2$, by the
definition of $L_{\pi_1(i),i}^{\mathbf{\Pi_1}}$, we have
\begin{align}\label{eq:TIF}
    \Pro{L_{\pi_1(i),i}^{\mathbf{\Pi_1}} \leq
    \frac{\epsilon_0}{K\snr}} &=
        1
        -
        \Pro{\LIF{i}{i}+K_{i}\geq \frac{\epsilon_0}{K\snr}}
    \nonumber\\
    &\geq
        1
        -
        \Pro{
            \LIF{i}{i}\geq \frac{\epsilon_0}{2K\snr}
        } -
        \Pro{
            K_{i}\geq \frac{\epsilon_0}{2K\snr}
        },
\end{align}
where the inequality follows from the fact that for any random
variables $X$ and $Y$, $\Pro{X+Y \geq \epsilon}$ $\leq \Pro{X \geq
\frac{\epsilon}{2}}+ \Pro{Y \geq \frac{\epsilon}{2}}$
\cite{LB:11:book}. In the same manner, let $\CanSA_m\triangleq
\mathcal{N}_1 \setminus \{\pi_{1}(\ell)\}_{\ell=1}^{m-1}$ with
$\{\pi_{1}(\ell)\}_{\ell=1}^{m-1}=\emptyset$ and $\Norm{\CanSA_m}$
be the candidate set associated with the first relay set and the
$m$th S--D pair and its cardinality, respectively. Then, we can
bound the first two terms in the RHS of~\eqref{eq:TIF} as follows:
\begin{align}\label{eq:POND:1}
    1
        -
        \Pro{
            \LIF{i}{i}\geq \frac{\epsilon_0}{2K\snr}
        }
    &=
        1
        -
        \Pro{\min_{j\in\mathcal{N}_1}\tilde{L}_{j,i} \geq \frac{\epsilon_0}{2K\snr}}
    \nonumber\\
    &=
        1
        -
        \left(
            1
            -
            \mathcal{F}_{\tilde{L}}
            \left(
                \frac{\epsilon_0}{2K\snr}
            \right)
        \right)^{\Norm{\CanSA_i}}
    \nonumber\\
    &\geq
        1
        -
            \left(
                1
                -
                C_1
                \left(
                    \frac{\epsilon_0}{2K\snr}
                \right)^{2K-2}
            \right)^{\Norm{\mathcal{N}_1}-K+1},
\end{align}
where the last inequality follows from Lemma~\ref{lem:1} with
$C_1=\frac{e^{-1}2^{-(2K-2)}}{\Gamma(2K-1)}$. Finally, from
\eqref{eq:POND:2}, it follows that $\Pro{K_{i}\geq
\frac{\epsilon_0}{2K\snr}}$ tends to zero as
$\Norm{\mathcal{N}_2}$ grows large by noting that
$L_{\pi_2(i),i}^{\mathbf{\Pi_2}}=\tilde{L}_{\pi_2(i),i}+K_i$ due
to the reciprocal property of TDD systems. From \eqref{eq:POND:2},
\eqref{eq:TIF}, and \eqref{eq:POND:1}, it is obvious that if
$\Norm{\mathcal{N}_1}$ and $\Norm{\mathcal{N}_2}$ scale faster
than $\snr^{2K-2}$ and $\snr^{3K-2}$, respectively, then
\begin{align}
    \lim_{\snr\to \infty}
        \left(
            1
            -
            C_1
            \left(
                \frac{\epsilon_0}{2K\snr}
            \right)^{2K-2}
        \right)^{\Norm{\mathcal{N}_1}-K+1}
    = 0
    \\
    \lim_{\snr\to \infty}
        \left(
            1
            -
            C_2
            \left(
                \frac{\epsilon_0}{K\snr}
            \right)^{3K-2}
        \right)^{\Norm{\mathcal{N}_2}-K+1}
    = 0.
\end{align}

Therefore, $\mathcal{P}_{\mathrm{OND}}$ asymptotically approaches
one, which means that the DoF of $\frac{(L-1)K}{L}$ is achievable
with high probability if $N = \Norm{\mathcal{N}_1}+
\Norm{\mathcal{N}_2}=\omega\left( \snr^{3K-2}\right)$. This
completes the proof of the theorem.
\end{proof}

Note that the lower bound on the DoF asymptotically approaches $K$
for large $L$, which implies that our system operates in virtual
full-duplex mode. The parameter $N$ required to obtain full DoF
(i.e., $K$ DoF) needs to increase exponentially with the number of
S--D pairs, $K$, in order to make the sum of $3K-2$ interference
terms in the TIL metric (\ref{eq:TIL}) non-increase with
increasing SNR at each relay.\footnote{Since $\snr^{2K-2}$ scales
slower than $\snr^{3K-2}$, it does not affect the performance in
terms of DoF and relay scaling laws.} Here, from the perspective
of each relay in $\mathbf{\Pi}_2$, the SNR exponent $3K-2$
indicates the total number of interference links and stems from
the following three factors: the sum of interference power
received from other sources, the sum of interference power
generated to other destinations, and the sum of inter-relay
interference power generated from the relays in $\mathbf{\Pi}_1$.
From Theorem~\ref{thrm:1}, let us provide the following
interesting discussions regarding the DoF achievability.

\begin{remark}
$K$ DoF can be achieved by using the proposed OND scheme in the
$K\times N\times K$ channel with interfering links among relay
nodes, if the number of relay nodes, $N$, scales faster than
$\snr^{3K-2}$ and the number of transmission slots in one block,
$L$, is sufficiently large. In this case, all the interference
signals are almost nulled out at each selected relay by exploiting
the multiuser diversity gain. In other words, by applying the OND
scheme to the interference-limited $K\times N\times K$ channel
such that the channel links are inherently coupled with each
other, the links among each S--D path via one relay can be
completely decoupled, thus enabling us to achieve the same DoF as
in the interference-free channel case.
\end{remark}


\begin{remark}
It is not difficult to show that the centralized relay selection
method that maximizes the received SINR (at either the relay or
the destination) using global CSI at the transmitter, which is a
combinatorial problem with exponential complexity, gives the same
relay scaling result $N = \omega\left( \snr^{3K-2}\right)$ along
with full DoF. However, even with our OND scheme using a
decentralized relay selection based only on local CSI, the same
achievability result is obtained, thus resulting in a much easier
implementation.
\end{remark}


\subsection{The TIL Decaying Rate}

In this subsection, we analyze the TIL decaying rate under the OND
scheme with alternate relaying, which is meaningful since the
desired relay scaling law is closely related to the TIL decaying
rate with respect to $N$ for given SNR.

Let $L_{K\text{th-min}}$ denote the $K$th smallest TIL among the
ones that $N$ selected relay nodes compute. Since the $K$ relays
yielding the TIL values up to the $K$th smallest one are selected,
the $K$th smallest TIL is the largest among the TILs that the
selected relays compute. Similarly as in~\cite{Jose12}, by
Markov's inequality, a lower bound on the average decaying rate of
$L_{K\text{th-min}}$ with respect to $N$,
$\mathbb{E}\left[\frac{1}{L_{K\text{th-min}}}\right]$, is given by
\begin{align}
\mathbb{E}\left[\frac{1}{L_{K\text{th-min}}}\right]\ge
\frac{1}{\epsilon}\Pr(L_{K\text{th-min}}\le \epsilon),
\end{align}
where the inequality always holds for $\epsilon>0$. We denote
$\mathcal{P}_K(\epsilon)$ as the probability that there are only
$K$ relays satisfying $\text{TIL}\le \epsilon$, which is expressed
as
\begin{align}
\mathcal{P}_K(\epsilon)= {{N}\choose{K}} \mathcal{F}_L(\epsilon)^K
(1-\mathcal{F}_L(\epsilon))^{N-K}, \label{EQ:P_K}
\end{align}
where $\mathcal{F}_L(\epsilon)$ is the CDF of the TIL. Since
$\Pr(L_{K\text{th-min}}\le \epsilon)$ is lower-bounded by
$\Pr(L_{K\text{th-min}}\le \epsilon)\ge \mathcal{P}_K(\epsilon)$,
a lower bound on the average TIL decaying rate is given by
\begin{align}
\mathbb{E}\left[\frac{1}{L_{K\text{th-min}}}\right]\ge
\frac{1}{\epsilon}\mathcal{P}_K(\epsilon). \label{EQ:TIL_min}
\end{align}

The next step is to find the parameter $\hat{\epsilon}$ that
maximizes $\mathcal{P}_K(\epsilon)$ in terms of $\epsilon$ in
order to provide the tightest lower bound.

\begin{lemma} \label{lem:2}
When a constant $\hat{\epsilon}$ satisfies the condition
$\mathcal{F}_L(\hat{\epsilon})=K/N$,
$\mathcal{P}_K(\hat{\epsilon})$ in (\ref{EQ:P_K}) is maximized for
a given $N$.
\end{lemma}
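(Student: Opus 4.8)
The plan is to reduce the maximization over $\epsilon$ to an elementary single-variable optimization. First I would set $p \triangleq \mathcal{F}_L(\epsilon)$ and observe that $\mathcal{F}_L$, being the CDF of a chi-square random variable with $2(3K-2)$ degrees of freedom (see \eqref{eq:L:cdf}), is continuous and strictly increasing on $(0,\infty)$, hence a bijection onto $(0,1)$. Consequently, for fixed $N$ and $K$, maximizing $\mathcal{P}_K(\epsilon)$ in \eqref{EQ:P_K} over $\epsilon>0$ is equivalent to maximizing $g(p)\triangleq \binom{N}{K}\,p^{K}(1-p)^{N-K}$ over $p\in(0,1)$, with the optimal threshold recovered uniquely as $\hat{\epsilon}=\mathcal{F}_L^{-1}(\hat{p})$.

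Next I would locate the interior critical point of $g$. Differentiating and factoring out the common term $p^{K-1}(1-p)^{N-K-1}$, which is strictly positive on $(0,1)$, gives $g'(p)=\binom{N}{K}\,p^{K-1}(1-p)^{N-K-1}\big(K(1-p)-(N-K)p\big)=\binom{N}{K}\,p^{K-1}(1-p)^{N-K-1}\big(K-Np\big)$. Setting the final bracket to zero yields the unique stationary point $\hat{p}=K/N$, which lies in $(0,1)$ whenever $N>K$. Because the prefactor is positive, the sign of $g'(p)$ coincides with the sign of $K-Np$: it is positive for $p<K/N$ and negative for $p>K/N$. Hence $g$ increases and then decreases, so $\hat{p}=K/N$ is the unique global maximizer on $(0,1)$, the boundary values $g(0)=g(1)=0$ being discarded. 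Translating back through the monotone map $\mathcal{F}_L$, the maximizer is the constant $\hat{\epsilon}$ determined by $\mathcal{F}_L(\hat{\epsilon})=K/N$, which is exactly the claimed condition.

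There is no serious obstacle here; the only point requiring genuine care is the justification that maximizing over $\epsilon$ may be replaced by maximizing over $p$, which rests on the strict monotonicity and continuity of $\mathcal{F}_L$. If one prefers to avoid inverting the CDF, an equivalent route is to differentiate $\mathcal{P}_K(\epsilon)$ directly in $\epsilon$, using $\frac{d}{d\epsilon}\mathcal{F}_L(\epsilon)=f_L(\epsilon)>0$ and factoring out $f_L(\epsilon)$ together with $\mathcal{F}_L(\epsilon)^{K-1}\big(1-\mathcal{F}_L(\epsilon)\big)^{N-K-1}$; the vanishing factor is again $K-N\,\mathcal{F}_L(\epsilon)$, producing the same condition $\mathcal{F}_L(\hat{\epsilon})=K/N$. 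Either way the computation is short, and the essential content of the lemma is simply that $g(p)=\binom{N}{K}p^K(1-p)^{N-K}$ is maximized at $p=K/N$.
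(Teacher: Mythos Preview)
Your argument is correct and matches the paper's approach: the paper differentiates $\mathcal{P}_K(\epsilon)$ directly in $\epsilon$, factors out the positive density and the positive prefactor $\mathcal{F}_L(\epsilon)^{K-1}(1-\mathcal{F}_L(\epsilon))^{N-K-1}$, and reads off the vanishing factor $K-N\mathcal{F}_L(\epsilon)$ together with the sign change at $\hat{\epsilon}=\mathcal{F}_L^{-1}(K/N)$, which is exactly your ``alternative route.'' Your preliminary change of variables $p=\mathcal{F}_L(\epsilon)$ is a harmless reparametrization of the same computation, and your explicit note that $\mathcal{F}_L$ is continuous and strictly increasing is a small refinement the paper leaves implicit.
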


\begin{proof}
To find the parameter $\hat{\epsilon}$ that maximizes
$\mathcal{P}_K(\epsilon)$, we take the first derivative with
respect to $\epsilon$, resulting in
\begin{align}
&\frac{\partial\mathcal{P}_K(\epsilon) }{\partial\epsilon} =
\frac{\partial\mathcal{F}_T(\epsilon) }{\partial\epsilon}
{{N}\choose{K}}\mathcal{F}_T(\epsilon)^{K-1}\left(1-\mathcal{F}_T(\epsilon)\right)^{N-K-1}\!\!\left(K-N\mathcal{F}_T(\epsilon)\right),
\nonumber
\end{align}
which is zero when
\begin{align}
\hat{\epsilon}=\mathcal{F}_T^{-1}\left(\frac{K}{N}\right).
\end{align}
The parameter $\hat{\epsilon}$ is the unique value that maximizes
$\mathcal{P}_K(\epsilon)$ since
\begin{align}
&\frac{\partial\mathcal{P}_K(\epsilon)
}{\partial\epsilon}\bigg|_\epsilon
>0~~\text{if}~0<\epsilon<\hat{\epsilon} \nonumber\\
& \frac{\partial\mathcal{P}_K(\epsilon)
}{\partial\epsilon}\bigg|_\epsilon
>0~~\text{if}~\epsilon\ge \hat{\epsilon}, \nonumber
\end{align}
which completes the proof of the lemma.
\end{proof}

Now, we establish our second main theorem, which shows a lower
bound on the TIL decaying rate with respect to $N$.

\begin{theorem} \label{thrm:2}
Suppose that the OND scheme with alternate relaying is used for
the $K\times N\times K$ channel with interfering relays. Then, the
decaying rate of TIL is lower-bounded by
\begin{align}
\mathbb{E}\left[\frac{1}{L_{K\text{th-min}}}\right] \ge
\Theta\left(N^{\frac{1}{3K-2}}\right).
\end{align}
\end{theorem}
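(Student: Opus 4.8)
The plan is to feed the optimal threshold from Lemma~\ref{lem:2} into the generic lower bound \eqref{EQ:TIL_min} and then control the two resulting factors separately. Since \eqref{EQ:TIL_min} holds for every $\epsilon>0$, I would specialize it to the maximizer $\hat\epsilon=\mathcal{F}_L^{-1}(K/N)$ of Lemma~\ref{lem:2}, obtaining
\[
\mathbb{E}\left[\frac{1}{L_{K\text{th-min}}}\right]\ge \frac{1}{\hat\epsilon}\,\mathcal{P}_K(\hat\epsilon).
\]
The argument then reduces to two independent estimates: first, that the probability factor $\mathcal{P}_K(\hat\epsilon)$ stays bounded away from zero as $N$ grows; and second, that $1/\hat\epsilon$ grows on the order of $N^{1/(3K-2)}$. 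Their product yields the claimed rate.

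For the probability factor, I would substitute $\mathcal{F}_L(\hat\epsilon)=K/N$ directly into \eqref{EQ:P_K}, giving
\[
\mathcal{P}_K(\hat\epsilon)=\binom{N}{K}\left(\frac{K}{N}\right)^{K}\left(1-\frac{K}{N}\right)^{N-K}.
\]
Using $\binom{N}{K}\sim N^{K}/K!$ as $N\to\infty$, the factor $\binom{N}{K}(K/N)^{K}$ converges to $K^{K}/K!$, while $(1-K/N)^{N-K}\to e^{-K}$. Hence $\mathcal{P}_K(\hat\epsilon)\to K^{K}/(K!\,e^{K})$, a strictly positive constant that is independent of $N$ for fixed $K$; that is, $\mathcal{P}_K(\hat\epsilon)=\Theta(1)$.

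For the threshold factor, I would invoke Lemma~\ref{lem:1}. Since $K/N\to 0$, the relation $\mathcal{F}_L(\hat\epsilon)=K/N$ forces $\hat\epsilon\to 0$, so for all sufficiently large $N$ we have $0<\hat\epsilon\le 2$ and the bound $\mathcal{F}_L(\hat\epsilon)\ge C_2\hat\epsilon^{3K-2}$ applies. Combining it with $\mathcal{F}_L(\hat\epsilon)=K/N$ gives $\hat\epsilon\le (K/(C_2 N))^{1/(3K-2)}$, hence $1/\hat\epsilon\ge (C_2 N/K)^{1/(3K-2)}=\Theta(N^{1/(3K-2)})$. Multiplying the two estimates then delivers $\mathbb{E}[1/L_{K\text{th-min}}]\ge\Theta(N^{1/(3K-2)})$.

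The step I expect to be the crux is the first estimate: verifying that $\mathcal{P}_K(\hat\epsilon)$ does \emph{not} decay with $N$. This is where the optimal choice of $\hat\epsilon$ pays off, because the growth of $\binom{N}{K}$ exactly cancels the decay of $(K/N)^{K}$, leaving the $\Theta(1)$ constant $K^{K}/(K!\,e^{K})$; a suboptimal $\epsilon$ would generally make this factor vanish and destroy the rate. A minor but necessary bookkeeping point is confirming that $\hat\epsilon$ eventually enters the interval $(0,2]$ so that Lemma~\ref{lem:1} is legitimately applicable, which follows immediately from $K/N\to 0$.
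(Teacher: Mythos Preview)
Your proposal is correct and follows essentially the same route as the paper: both plug $\hat\epsilon=\mathcal{F}_L^{-1}(K/N)$ into \eqref{EQ:TIL_min}, show $\mathcal{P}_K(\hat\epsilon)=\Theta(1)$, and then use Lemma~\ref{lem:1} to get $1/\hat\epsilon\ge (C_2 N/K)^{1/(3K-2)}$. The only cosmetic difference is that the paper bounds $\mathcal{P}_K(\hat\epsilon)$ via the finite-$N$ inequalities $\binom{N}{K}\ge((N-K+1)/K)^K$ and $(1-K/N)^{N-K}\ge e^{-K}$ rather than your asymptotic limit $K^K/(K!\,e^{K})$; incidentally, your use of $C_2$ is correct (the paper's reference to $C_1$ at that step is a typo).
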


\begin{proof}
As shown in (\ref{EQ:TIL_min}), the TIL decaying rate is
lower-bounded by the maximum of
$\frac{1}{\epsilon}\mathcal{P}_K(\epsilon)$ over $\epsilon$. By
Lemma~\ref{lem:2}, $\mathcal{P}(\hat{\epsilon})$ is maximized when
$\hat{\epsilon}=\mathcal{F}_L^{-1}\left(\frac{K}{N}\right)$. Thus,
we have
\begin{align}
\mathbb{E}\left[\frac{1}{L_{K\text{th-min}}}\right] &\ge
\frac{1}{\mathcal{F}_L^{-1}(K/N)}{{N}\choose{K}}\left(\frac{K}{N}\right)^K
\left(1-\frac{K}{N}\right)^{N-K} \nonumber\\
& \ge
\frac{1}{\mathcal{F}_L^{-1}(K/N)}\left(\frac{N-K+1}{N}\right)^K
\!\!\!\left(1-\frac{K}{N}\right)^{N-K} \nonumber\\
& \ge \frac{1}{\mathcal{F}_L^{-1}(K/N)} \left(\frac{1}{K}\right)^K
e^{-K} \nonumber\\ &\ge \Theta\left(N^{\frac{1}{3K-2}}\right),
\nonumber
\end{align}
where the second and third inequalities hold since
${{N}\choose{K}}\ge \left(\frac{N-K+1}{K}\right)^K$ and
$\left(1-\frac{K}{N}\right)^{N-K} \ge \left(1-\frac{K}{N}\right)^N
\ge e^{-K}$, respectively. By Lemma~\ref{lem:1}, it follows that
$\frac{1}{\mathcal{F}_L^{-1}(K/N)}\ge
\left(\frac{C_1N}{K}\right)^{\frac{1}{3K-2}}$, where $C_1$ is
given by (\ref{eq:C1}). Hence, the last inequality also holds,
which completes the proof of the theorem.
\end{proof}

From Theorem~\ref{thrm:2}, the following valuable insight is
provided: the smaller SNR exponent of the relay scaling law, the
faster TIL decaying rate with respect to $N$. This asymptotic
result will be verified in a finite $N$ regime via numerical
evaluation in Section~\ref{sec:5}.

\subsection{OND Without Alternate Relaying}

For comparison, the OND scheme without alternate relaying is also
explained in this subsection.

It is worth noting that there exists a trade-off between the lower
bound on the DoF and the minimum number of relays required to
guarantee our achievability result by additionally introducing the
OND protocol without alternate relaying. In the scheme, the first
relay set $\mathbf{\Pi}_1$ only participates in data forwarding.
That is, the second relay set $\mathbf{\Pi}_2$ does not need to be
selected for the OND protocol without alternate relaying.
Specifically, the steps of each node during one block are then
described as follows:
\begin{itemize}
\item Time slot 1: Sources $\mathcal{S}_1,\cdots,\mathcal{S}_K$
transmit their first encoded symbols
$x_1^{(1)}(1),\cdots,x_K^{(1)}(1)$, where $x_k^{(1)}(l)$
represents the $l$th transmitted symbol of the $k$th source node.
A set of $K$ selected relay nodes,
$\mathbf{\Pi}_1=\{\pi_1(1),\cdots,\pi_1(K)\}$, operating in
receive mode at each odd time slot, listens to
$x_1^{(1)}(1),\cdots,x_K^{(1)}(1)$. Other $N-K$ relay nodes and
destinations $\mathcal{D}_1,\cdots,\mathcal{D}_K$ remain idle.

\item Time slot 2: The $K$ relays in the set $\mathbf{\Pi}_1$
forward their first re-encoded symbols
$x_{\pi_1(1)}^{(2)}(1),\cdots,$ $x_{\pi_1(K)}^{(2)}(1)$ to the
corresponding $K$ destinations. The $K$ destinations receive from
$\mathcal{R}_{\pi_1(1)},$ $\cdots,\mathcal{R}_{\pi_1(K)}$ and
decode $x_{\pi_1(1)}^{(2)}(1),\cdots,x_{\pi_1(K)}^{(2)}(1)$. The
remaining $N-K$ relays keep idle.

\item The processes in time slots 1 and 2 are repeated to the
$(L-1)$th time slot.

\item Time slot $L$: The $K$ relays in $\mathbf{\Pi}_1$ forward
their re-encoded symbols $x_{\pi_1(1)}^{(2)}(L-1),\cdots,$
$x_{\pi_1(K)}^{(2)}(L-1)$ to the corresponding $K$ destinations.
The $K$ sources and the other $N-K$ relays remain idle.
\end{itemize}

When $\mathcal{R}_i$ is assumed to serve the $k$th S--D pair
$(\mathcal{S}_k, \mathcal{D}_k)$ ($i\in\{1,\cdots,N\}$ and
$k\in\{1,\cdots,K\}$), it computes the scheduling metric
$\tilde{L}_{i,k}$ in (\ref{eq:IL}). According to the computed
$\tilde{L}_{i,k}$, a timer based method is used for relay
selection as in Section~\ref{SubSub:1strelay}. Because there is no
inter-relay interference for the OND scheme without alternate
relaying, it is expected that the minimum required $N$ to achieve
the optimal DoF is significantly reduced. Our third main theorem
is established as follows.

\begin{theorem} \label{thrm:3}
Suppose that the OND scheme without alternate relaying is used for
the $K\times N \times K$ channel. Then, for $L$ data transmission
time slots,
\begin{equation}
\textsf{DoF}_\textsf{total} \ge \frac{K}{2} \nonumber
\end{equation}
is achievable if $N = \omega\left( \snr^{2K-2}\right)$.
\end{theorem}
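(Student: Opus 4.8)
The plan is to follow the same structure as the proof of Theorem~\ref{thrm:1}, but exploiting the fact that without alternate relaying there is no inter-relay interference, so the relevant scheduling metric is $\tilde{L}_{i,k}$ rather than the full TIL $L_{i,k}^{\mathbf{\Pi_2}}$. First I would write down the DoF expression. Since only the first relay set $\mathbf{\Pi}_1$ participates and each relay is silent while listening (no $\mathbf{\Pi}_2$ to interfere), the received SINR at relay $\mathcal{R}_{\pi_1(i)}$ for the first hop and at destination $\mathcal{D}_i$ for the second hop are each lower-bounded by expressions whose denominators contain only $\snr\sum_{i=1}^K \tilde{L}_{\pi_1(i),i}$ (the inter-relay term $\sum_k \Norm{h_{\pi_1(i),\pi_2(k)}^{(r)}}^2$ that appeared in $L_{\pi_1(i),i}^{\mathbf{\Pi_1}}$ is now absent). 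The crucial difference from Theorem~\ref{thrm:1} is the time-slot accounting: here one block uses a listen slot followed by a forward slot, so the system transmits a fresh symbol only every other slot. This halves the effective rate and yields the prefactor $\frac{1}{2}$ instead of $\frac{L-1}{L}$, giving the target $\textsf{DoF}_\textsf{total}\ge \frac{K}{2}$.

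Next I would reduce the achievability to a relay-scaling condition exactly as in (\ref{eq:POND:def0})--(\ref{eq:POND:def}), but now with a single probability event since there is only one relay set to select:
\begin{align}
\mathcal{P}_{\mathrm{OND}} =
\lim_{\snr \to \infty}
\Pro{
\snr \sum_{i=1}^K \tilde{L}_{\pi_1(i),i} \leq \epsilon_0
}. \nonumber
\end{align}
The DoF of $\frac{K}{2}$ is then achievable for a fraction $\mathcal{P}_{\mathrm{OND}}$ of the time, so it suffices to show this probability tends to one under the stated scaling. I would bound it by the union-bound / order-statistics chain used in (\ref{eq:POND:2}): first pass to $\Pro{\max_{1\le i\le K}\tilde{L}_{\pi_1(i),i}\le \epsilon_0/(K\snr)}$, then use De Morgan and the union bound, then identify $\tilde{L}_{\pi_1(i),i}=\min_{j\in\mathcal{N}_1}\tilde{L}_{j,i}$, and finally invoke independence across the candidate relays together with the CDF lower bound $\mathcal{F}_{\tilde{L}}(\ell)\ge C_1\ell^{2K-2}$ from Lemma~\ref{lem:1}. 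This produces
\begin{align}
\mathcal{P}_{\mathrm{OND}} \geq
1 - K\left(
1 - C_1\left(\frac{\epsilon_0}{K\snr}\right)^{2K-2}
\right)^{\Norm{\mathcal{N}_1}-K+1}, \nonumber
\end{align}
which converges to one precisely when $\Norm{\mathcal{N}_1}$, and hence $N$, scales faster than $\snr^{2K-2}$, i.e. $N=\omega(\snr^{2K-2})$.

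The main point to get right — rather than a deep obstacle — is the bookkeeping that replaces the SNR exponent $3K-2$ by $2K-2$: since inter-relay interference is eliminated, $\tilde{L}_{i,k}$ is a chi-square variable with only $2(2K-2)$ degrees of freedom (the sum of interference received from the $K-1$ other sources and leaked to the $K-1$ other destinations), so its CDF near zero behaves like $\ell^{2K-2}$ rather than $\ell^{3K-2}$. This is exactly why the required relay scaling improves from $\snr^{3K-2}$ to $\snr^{2K-2}$. I would also remark that, unlike Theorem~\ref{thrm:1}, the factor $\frac{L-1}{L}$ does not appear because the block structure here is genuinely two-slot-per-symbol; the $\frac{1}{2}$ loss is intrinsic to half-duplex operation without the virtual full-duplex trick, which is precisely the trade-off this subsection is meant to illustrate.
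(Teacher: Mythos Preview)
Your proposal is correct and follows exactly the approach the paper intends: the paper's own proof of Theorem~\ref{thrm:3} simply states that it ``basically follows the same line as the proof of Theorem~\ref{thrm:1},'' and you have carried out precisely that adaptation, correctly replacing $L_{i,k}^{\mathbf{\Pi_2}}$ by $\tilde{L}_{i,k}$, reducing the chi-square degrees of freedom from $2(3K-2)$ to $2(2K-2)$, invoking the $C_1\ell^{2K-2}$ bound from Lemma~\ref{lem:1}, and accounting for the half-duplex $1/2$ prefactor in place of $(L-1)/L$.
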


\begin{proof}
The detailed proof of this argument is omitted here since it
basically follows the same line as the proof of
Theorem~\ref{thrm:1}.
\end{proof}

In Section~\ref{sec:5}, it will be also seen that in a finite $N$
regime, there exists the case even where the OND without alternate
relaying outperforms that of the OND with alternate relaying in
terms of achievable sum-rates via computer simulations.


\section{Upper Bound for DoF} \label{sec:4}

In this section, to show the optimality of the proposed OND scheme
in the $K\times N\times K$ channel with interfering relays, which
consists of $K$ S--D pairs and $N$ relay nodes, we derive an upper
bound on the DoF using the cut-set bound~\cite{CoverThomas:91} as
a counterpart of the lower bound on the total DoF in
Section~\ref{subsec:ana}. Suppose that $\tilde{N}$ relay nodes are
active, i.e., receive packets and retransmit their re-encoded
ones, simultaneously, where $\tilde{N}\in\{1,\cdots,N\}$. This is
a generalized version of our transmission since it is not
characterized how many relays need to be activated simultaneously
to obtain the optimal DoF. We consider the two cuts $L_1$ and
$L_2$ dividing our network into two parts in a different manner.
Let $\mathcal{S}_{L_i}$ and $\mathcal{D}_{L_i}$ denote the sets of
sources and destinations, respectively, for the cut $L_i$ in the
network ($i=1,2$). For the $K\times N\times K$ channel model with
interfering relays, we now use the fact that there is no direct
path between an S--D pair. Then, it follows that under $L_1$, $K$
transmit nodes in $\mathcal{S}_{L_1}$ are on the left of the
network, while $\tilde{N}$ active relay nodes and $K$ (final)
destination nodes in $\mathcal{D}_{L_2}$ are on the right and act
as receivers. In this case, we can create the $K\times
(\tilde{N}+K)$ multiple-input multiple-output (MIMO) channel
between the two sets of nodes separated by the cut. Similarly, the
$(\tilde{N}+K) \times K$ MIMO channel are obtained under the cut
$L_2$. It is obvious to show that DoF for the two MIMO channels is
upper-bounded by $K$. Hence, it turns out that even with the
half-duplex assumption, our lower bound on the DoF based on the
OND with alternate relaying asymptotically approaches this upper
bound on the DoF for large $L$.


Note that this upper bound is generally derived regardless of
whether the number of relays, $N$, tends to infinity or not,
whereas the scaling condition $N = \omega\left(
\snr^{3K-2}\right)$ is included in the achievability proof.

\begin{figure}[t!]
    \centerline{\includegraphics[width=0.57\textwidth]{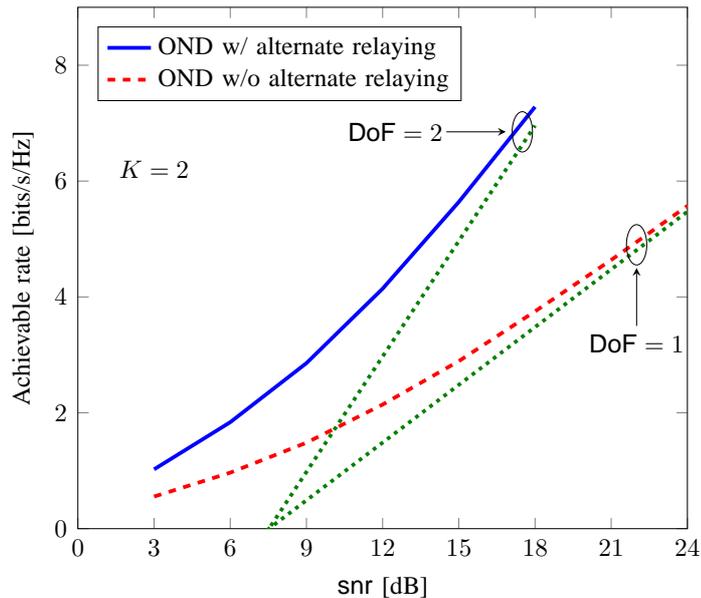}}
    \caption{The achievable sum-rates as a function of $\snr$ when $K=2$ and $N=\snr^{4}$ in the $K\times N \times K$ channel. It is assumed that $N=\snr^{4}$ and $N=\snr^{2}$ are used for the OND schemes with and without alternate relaying, respectively.
    }
    \label{fig:1}
\end{figure}
\begin{figure}[t]
    \centerline{\includegraphics[width=0.57\textwidth]{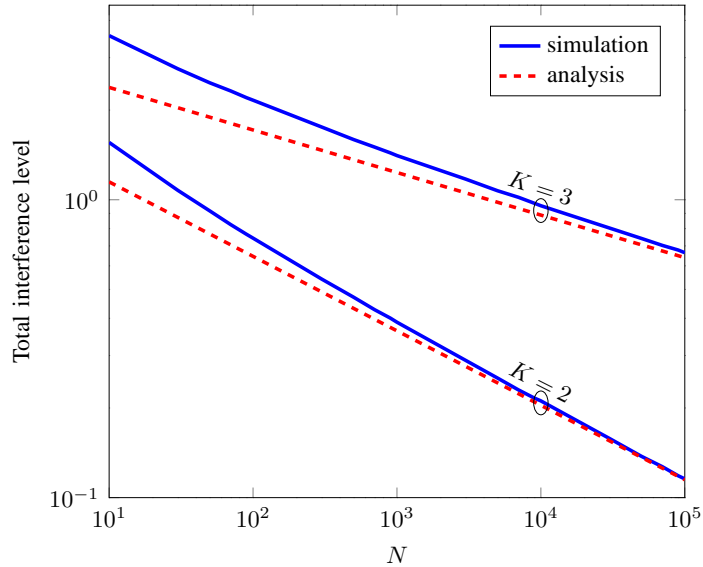}}
    \caption{The average TIL versus $N$ when $K=2,3$ in the $K\times N \times K$ channel.
    }
    \label{fig:2}
\end{figure}
\section{Numerical Evaluation} \label{sec:5}
In this section, we perform computer simulations to validate the
achivability result of the proposed OND scheme in
Section~\ref{sec:3} for finite parameters $N$ and SNR in the
$K\times N\times K$ channel model with interfering relays. In our
simulation, the channel coefficients in (\ref{EQ:y1}) and
(\ref{EQ:y2}) are generated $1\times10^5$ times for each system
parameter.

Figure~\ref{fig:1} shows the achievable sum-rates of the $K\times
N \times K$ channel for the OND schemes with and without alternate
relaying according to $\snr$ in dB scale when $K=2$. Note that $N$
is set to a different scalable value according to $\snr$, i.e.,
$N=\snr^{3K-2}$ for the OND with alternate relaying and
$N=\snr^{2K-2}$ for the OND without alternate relaying,
respectively, to see whether the slope of each curve follows the
DoF in Theorems~\ref{thrm:1} and~\ref{thrm:2}. In the figure, the
dotted green lines are also plotted to indicate the first order
approximation of the achievable rates with a proper bias, where
the slopes are given by $K$ and $K/2$ for the OND schemes with and
without alternate relaying, respectively.

In Fig.~\ref{fig:2}, the log-log plot of the average TIL of the
OND with alternate relaying versus $N$ is shown for the $K\times N
\times K$ channel when $K=2,3$.\footnote{Even if it seems
unrealistic to have a great number of relays in cooperative relay
networks, the range for parameter $N$ is taken into account to
precisely see some trends of curves varying with $N$.} It can be
seen that the TIL tends to decrease linearly with $N$. It is
further seen how many relays are required with the OND scheme with
alternate relaying to guarantee that the TIL is less than a small
constant for a given parameter $K$. In this figure, the dashed
lines are also plotted from theoretical results in
Theorem~\ref{thrm:2} with a proper bias to check the slope of the
TIL. We can see that the TIL decaying rates are consistent with
the relay scaling law condition in Theorem~\ref{thrm:1}. More
specifically, the TIL is reduced as $N$ increases with the slope
of 0.25 for $K=2$ and 0.143 for $K=3$, respectively.

\begin{figure}[t]
    \centerline{\includegraphics[width=0.57\textwidth]{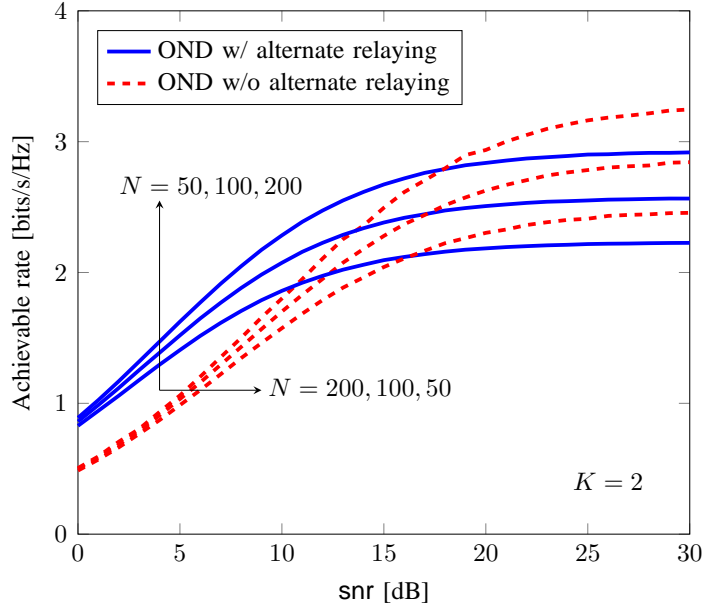}}
    \caption{The achievable sum-rates versus $\snr$ when $K=2$ and $N=100,200$ in the $K\times N \times K$
    channel. Both OND schemes with and without alternate relaying
    are compared.
    }
    \label{fig:3}
\end{figure}

\begin{figure}[t]
    \centerline{\includegraphics[width=0.57\textwidth]{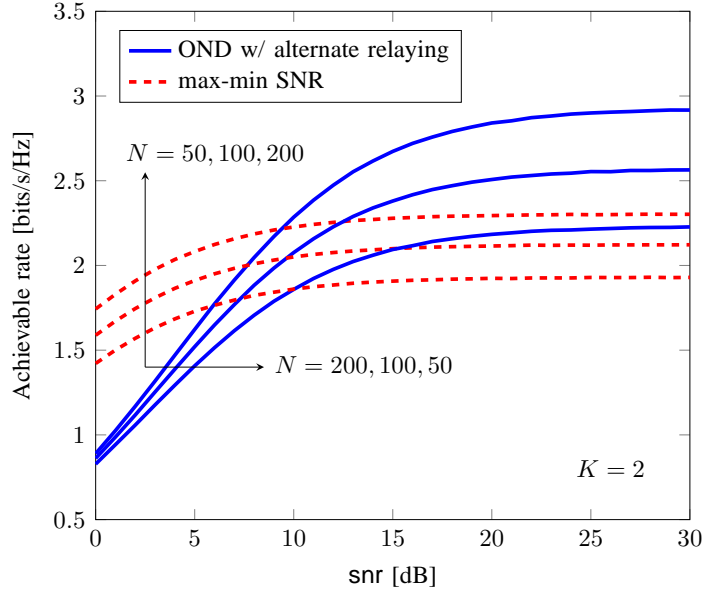}}
    \caption{The achievable sum-rates versus $\snr$ when $K=2$ and $N=100,200$ in the $K\times N \times K$
    channel. The OND scheme with alternate relaying and the
    max-min SNR scheme are compared.
    }
    \label{fig:4}
\end{figure}


Figure~\ref{fig:3} illustrates the achievable sum-rates of the
$K\times N \times K$ channel for the OND schemes with and without
alternate relaying versus $\snr$ (in dB scale) when $K=2$ and
$N=50,100,200$. We can see that in a finite $N$ regime, there
exists the case where the OND without alternate relaying
outperforms that of the OND with alternate relaying. This is
because for finite $N$, the achievable sum-rates for the alternate
relaying case tend to approach a floor with increasing SNR faster
than no alternate relaying case due to more residual interference
in each dimension. We can also see that the crossing points
slightly move to the right as $N$ increases; this is due to the
fact that our OND scheme with alternate relaying always benefits
from having more relays for selection, thus resulting in more
multiuser diversity gain.
This highly motivates us to operate our system in a switched
fashion when the relay selection scheme is chosen between the OND
schemes with and without alternate relaying depending on the
operating regime of our system.

To further ascertain the efficacy of our scheme, a performance
comparison is performed with a baseline scheduling. Specifically,
in the \emph{max-min SNR} scheme, each S--D pair selects one relay
$\mathcal{R}_i$ ($i\in\{1,\cdots,N\}$) such that the minimum out
of the desired channel gains of two communication links (either
from $\mathcal{S}_k$ to $\mathcal{R}_i$ or from $\mathcal{R}_i$ to
$\mathcal{D}_k$) becomes the maximum among the associated minimum
channel gains over all the unselected relays. This max-min SNR
scheme is well-suited for relay-aided systems if interfering links
are absent. The achievable sum-rates are illustrated in
Fig.~\ref{fig:4} according to $\snr$ (in dB scale) when $K=2$ and
$N=100,200$. We can see that our OND scheme with alternate
relaying outperforms this baseline scheme beyond a certain low SNR
point. We also see that the rate gaps increase when $N$ increases
in the high SNR regime. On the other hand, for fixed $N$, the
sum-rates of the max-min scheme are slightly changed with respect
to $\snr$ due to more residual interference in each dimension.



\section{Concluding Remarks} \label{sec:6}

An efficient distributed OND protocol operating in virtual
full-duplex mode was proposed for the $K\times N\times K$ channel
with interfering relays, referred to as one of multi-source
interfering relay networks. A novel relay scheduling strategy with
alternate half-duplex relaying was presented in two-hop
environments, where a subset of relays is opportunistically
selected in terms of producing the minimum total interference
level, thereby resulting in network decoupling. It was shown that
the OND protocol asymptotically achieves full DoF even in the
presence of inter-relay interference and half-duplex assumption,
provided that the number of relays, $N$, scales faster than
$\snr^{3K-2}$. Numerical evaluation was also shown to verify that
our scheme outperforms the other relay selection methods under
realistic network conditions (e.g., finite $N$ and SNR) with
respect to sum-rates.

Suggestions for future research in this area include the extension
to the MIMO $K\times N\times K$ channel and the optimal design of
joint beamforming and scheduling under the MIMO model.


\end{document}